\newcommand{\eg}{\textit{e.g., }} 
\newcommand{\ie}{\textit{i.e., }}   
\newcommand{\etc}{\textit{etc.}} 
\newcommand{\whp}{\textit{w.h.p. }}
\newcommand{\fission}{\textsf{Fission }} 
\newcommand{\fissionb}{\textsf{Fission}} 
\newcommand{\fissions}{\textsf{Fission's }}
\newcommand{\fitb}{\textsf{FIT}} 
\newcommand{\ppap}{\textsf{PPAP }} 
\newcommand{\ppapb}{\textsf{PPAP}}
\DeclareMathOperator*{\E}{E}
\DeclareMathOperator*{\Var}{Var}
\DeclareMathOperator*{\argmin}{argmin}
\DeclareMathOperator*{\loglog}{loglog}
\renewcommand\footnotetextcopyrightpermission[1]{}
\begin{document}
\title{Fission: A Provably Fast, Scalable, and Secure Permissionless Blockchain}
\subtitle{Version 1.0}
\author{Ke Liang}
\affiliation{%
	\institution{The Fission Project}
}
\email{ke@fission.org}

\begin{abstract}
We present \fissionb, a new permissionless blockchain that achieves scalability in both terms of system throughput and transaction confirmation time, while at the same time, retaining blockchain's core values of equality and decentralization.

\fission overcomes the system throughput bottleneck by employing a novel Eager-Lazy pipeling model that achieves very high system throughputs via block pipelining, an adaptive partitioning mechanism that auto-scales to transaction volumes, and a provably secure energy-efficient consensus protocol to ensure security and robustness. 

\fission applies a hybrid network which consists of a relay network, and a peer-to-peer network. The goal of the relay network is to minimize the transaction confirmation time by minimizing the information propagation latency. To optimize the performance on the relay network in the presence of churn, dynamic network topologies, and network heterogeneity, we propose an ultra-fast game-theoretic relay selection algorithm that achieves near-optimal performance in a fully distributed manner. \fissions peer-to-peer network complements the relay network and provides a very high data availability via enabling users to contribute their storage and bandwidth for information dissemination (with incentive). We propose a distributed online data retrieval strategy that optimally offloads the relay network without degrading the system performance.

By re-innovating all the core elements of the blockchain technology - computation, networking, and storage - in a holistic manner, \fission aims to achieve the best balance among scalability, security and decentralization.
\end{abstract}

\maketitle

\section{Introduction}
Due to the phenomenal success of cryptocurrencies \cite{nakamoto2008bitcoin, wood2014ethereum} in the last few years, blockchain technology has gained massive interests, and recently emerged with a promise to streamline interactions in a wide range of settings. A blockchain, also called a distributed ledger, is a decentralized and incorruptible ledger used to record transactions across many nodes\footnote{We use the terms node, peer, and user interchangeably.} which do not need to fully trust each other. By applying certain consensus mechanisms, all the nodes in a blockchain network agree on an ordered set of blocks, each of which could contain multiple transactions. It should be noted that we only discuss permissionless or public blockchains in this paper, where anyone can join and participate in the process of transaction verification and block proposing. 

The key feature of blockchains is \emph{decentralization}. That is, instead of relying on central trusted authorities or infrastructures, blockchains are built on top of a global peer-to-peer (P2P) network, where messages (\eg transactions, blocks) are disseminated to the whole network in a gossip-like manner. Moreover, everyone can verify all the transactions and propose new blocks which are supposed to be appended to the blockchain via consensus mechanisms. Note that these transactions are not just financial transactions (\eg cryptocurrencies, tokens), but virtually everything of value. By applying new applications, like smart contracts \cite{ethereum2016}, blockchains have been leveraged to many sectors, \eg financial services, Internet-of-Things, insurance, shifting the landscapes of the industries that worth trillions of dollars. 

Despite the fact that the blockchain technology brings significant opportunities and disruptive potential in many industries, \emph{scalability} has been a key issue severely that limits the adoptions of the blockchain technology, causing poor user experience, congested network, and skyrocketing transaction fees. The scalability of blockchains is fundamentally hindered by the challenges of distributed system design (\eg consensus protocols), and limitations of the underlying P2P networks. More specifically, the former results from the core problem of all blockchains, \ie double-spending. It is difficult to prevent double-spending by achieving consensus on an ordered list of transactions in a public setting where anyone, including malicious users, can participate, anytime, anywhere. The latter is due to the fact that P2P network protocols (\eg Kademlia DHT \cite{maymounkov2002kademlia,anderson2016new}) that most existing blockchains employ are not designed for blockchains, where messages (\ie transactions and blocks) need to be disseminated constantly in a many-to-many manner. Therefore, there are two important challenges that a scalable blockchain should address: 1) high throughput (measured by the transaction rate, \ie TPS) and 2) fast confirmation times\footnote{A transaction is confirmed if it is included in a confirmed block.}. 

To address the first challenge, a considerable amount of Proof-of-Work (PoW) based consensus protocols~\cite{Sompolinsky2016SPECTREAF,sompolinsky2018phantom,eyal2016bitcoin,li2018scaling} and Proof-of-Stake (PoS) based consensus protocols~\cite{bentov2016snow,david2018ouroboros,Gilad:2017:ASB:3132747.3132757,buterin2017casper} have been proposed to increase the system throughput without compromising the security guarantees and decentralization. However, all of the above new PoS-based consensus protocols can only achieve sub-optimal performance in terms of system throughput and confirmation times due to 1) the constrained resources (computation, bandwidth, memory, \etc) of nodes that are selected for transaction verification and block proposing, and 2) the limitations of underlying P2P networks which results in high information propagation latency due to dynamic network topologies and huge blocks with a large number of transactions. To further improve the system throughput, sharding technology \cite{kokoris2018omniledger,hsiao-wei2017,zamani2018rapidchain} has been proposed to split up the task of consensus among multiple, smaller concurrently operating sets of nodes, thus reducing per-node processing and storage requirements. However, creating a secure \textit{sharding} solution that is capable of making cross-shard (or inter-shard) transactions (especially atomic synchronous transactions) is non-trivial, and existing sharding-based consensus protocols either make security/performance trade-offs or rely on strong assumptions (\eg trusted authorities) that defeats decentralization.

To address the second challenge, some centralized \textit{relay networks} has been applied (FIBRE\footnote{http://bitcoinfibre.org/} and Falcon\footnote{https://www.falcon-net.org/} for Bitcoin) or proposed~\cite{klarmanbloxroute} to reduce the information (\eg blocks) propagation latency. However, all the existing relay networks (either centralized or distributed) for blockchains have paid little attention to the \textit{system dynamics} (which can be categorized into churn, \ie users leave and join the system), and \textit{network heterogeneity} (\ie relayers have different and time-varying bandwidth capacities). Without addressing the system dynamics and network heterogeneity, the performance of relay networks may not only be far from optimal, but it may also be detrimental in practice.

In this paper, we present \fissionb, which achieves both high system throughput and fast confirmation times without compromising security and decentralization. \fission improves upon the scalability limitations in several ways. First, we propose an Eager-Lazy pipeling model that separates every atomic transaction into two successive and independent sub-transactions, which can be processed separately without any communication cost (which significantly deteriorates scalability with sharding based solutions). This enables optimal parallelization while maintaining consistency. Second, we propose an adaptive partitioning mechanism that groups sub-transactions into different partitions according to the latest transaction volumes. As a result, an optimal tade-off between system throughput and confirmation times can be achieved. Last but not least, we employ a PoS-based consensus protocol to reach consensus on transactions in every partition with high security guarantees while maintaining decentralization. We investigate and prove the effect of system activity and honest threshold to the security guarantees, and then propose an online algorithm that enables consensus to be reached in a distributed and non-interactive manner.

\fission employs a hybrid network to minimize the information propagation latency, as a result, both system throughput and confirmation times can be significantly improved. The proposed hybrid network consists of a \textit{relay network} and a \textit{P2P network}. To achieve a stable and near-optimal the information propagation latency in the presence of unpredictable system dynamics and network heterogeneity, we propose an \textit{ultra-fast} probabilistic relayer selection algorithm which converges in $O(\loglog N^{\text{r}})$ steps, where $N^{\text{r}}$ is total number of nodes in the relay network. The P2P network of \fission  complements the relay network such that blocks and even the whole blockchain data can be retrieved from nodes, thus dramatically increases the data availability. We propose an online data retrieval strategy that enables any node to retrieve information from both the relay network and the P2P network in an efficient and cost-effective manner.

The rest of paper is organized as follows. Section~\ref{sec:overview} provides an overview of \fission fundamentals, including the data structures and cryptographic technologies used in \fission blockchain.  We then provide the details of the Eager-Lazy transaction model, the adaptive partitioning mechanism and the consensus protocol in Section~\ref{sec:ppap}. In Section~\ref{sec:relay}, we investigate the problem of minimizing the information propagation, and describe our relay selection algorithm, followed by the convergence analysis of the distributed algorithm. Section~\ref{sec:p2p} details the P2P storage layer, the proposed data retrieval strategy and its efficiency. Section~\ref{sec:conclusion} will conclude the paper.

\section{Overview of \fission}\label{sec:overview}
\fission consists of three layers, each of which is designed to fulfill the requirements of three core components of a blockchain: 1) the computation layer than enables all the nodes agree on the common view of a blockchain in presence of Byzantine failures, 2) the network layer that delivers transactions and blocks to all the nodes to create consensus, and 3) the storage layer that form an append-only, temper-proof distributed ledger with cryptographic data structures.

The native currency in \fissionb, denoted by \fitb, is an \textit{utility} token that enables users to participate in consensus and pay for the transaction processing, smart contract execution, \etc. Without loss of generality, any reference to amount, value, balance or payment in \fission should be counted in \fitb.

\subsection{Definitions}

\subsubsection{Account} 
\fission is a account-based blockchain. Each account has a pair of private and public keys. The private key, denoted by $sk$, is generated by secp256k1 curve~\cite{sec20002}  and it should be always kept secret. The public key, denoted by $pk$, is derived from the private key with Elliptic Curve Digital Signature Algorithm (ECDSA) \cite{johnson2001}. The public key is also referred to as the \textit{address} of an account, and it can be safely shared in public as it is almost impossible to derive the private key from a public key. One user may control many accounts, but only one public key may exist per account.

\subsubsection{Transaction} 
A transaction is essentially a digitally signed message, where the data includes: 1) the transaction type, 2) a token transfer from one account (\ie the sender) to another (\ie the receiver), 3) a scalar value to be transfered from the sender, 4) a nonce that indicates the number of transactions sent by the sender, 5) a hash of the additional data that can be stored in the P2P storage layer, and 6) a signature of the transaction that used to determine the sender of the transactions. The unique identifier of a transaction is the SHA3-256 hash of the transaction.

\subsubsection{Block} 
A block in \fission consists of 1) a body that contains a list of ordered transactions to be confirmed, and 2) a header that contains Merkle root arrays of these transactions, \textit{account table}, and \textit{transaction logs}, as well as the metadata (\eg signatures, votes) needed for the consensus. The account table is a mapping between accounts and their states which contains the information like balances, nonces, \etc., and the transaction logs contains the post-transaction states and logs created through execution of these transactions.

\subsubsection{Chain}
\fission can be viewed as a transaction-based state machine where every block contains a set of \textit{states} that can include such information as account balances, transaction logs, \etc. To apply the novel Eager-Lazy pipeling model (see Section~\ref{sec:eager-lazy}) that significantly improve the scalability, \fission introduces two types of blocks that are appended to the blockchain in an alternating manner: 1) the \textit{interim} block that confirms the eager sub-transactions, and 2) the \textit{main} block that confirms the corresponding lazy sub-transactions.

\subsubsection{Merkle root array} Every block has a set of Merkle root arrays (\eg \texttt{txRoot[]}), each of which is an array of Merkle roots of all the \emph{shards}. A Merkle tree is a binary tree in which every leaf node is labeled with the hash of a data block and every non-leaf node is labeled with the cryptographic hash of the labels of its child nodes. 


\subsubsection{Shard} A shard is a subset of transactions and  states that are grouped by applying modulo function on their public keys. Let $pk_i$ denote the public key of node $v_i$, then all the transactions sent by $v_i$ and the states of $v_i$ will be assigned to the shard with index $j\equiv pk_i \mod N_{\text{shard}} $, where $N_{\text{shard}}$ is the total number of shards. The value  of $N_{\text{shard}}$ is contained in every block header, and it is increased based on re-sharding algorithm which is detailed in Section~\ref{sec:ap:re-sharding}.

\subsubsection{Partition} A partition consists of one or more shards. \fissions consensus is operating at the partition level such that a small group of nodes will be randomly selected (with probability proportional to their stakes) for each partition to reach a consensus on all the transactions in shards of the partition. Similarly, as detailed in Section~\ref{sec:ap}, shards are assigned to partitions by applying modulo function on their shard indexes.

\subsection{The Computation Layer}\label{sec:overview:comp}
The goal of the computation layer is to verify transactions, reach consensus on new blocks, and append blocks to the blockchain. To improve the system throughput, sharding, a commonly used scalability scheme in databases decades ago, has been applied to parallelize transaction processing via splitting the overheads of operation among multiple, smaller groups of nodes (\ie shards). However, all existing sharding-based solutions are far from optimal, as they impose a huge burden for the network due to cross-shard communication and synchronization. This significantly deteriorates both the system throughput and transaction confirmation times. 

To overcome the limitations of scalability, we propose a novel Eager-Lazy pipeling model, an adaptive partitioning mechanism and a PoS-based consensus protocol in \fissions computation layer, which significantly improve the scalability while maintaining security and decentralization. 

\subsubsection{Eager-Lazy pipeling model} 
We propose a Eager-Lazy pipeling model (see Section~\ref{sec:eager-lazy}) that maximizes the parallelization by block pipeling without cross-shard communication. The basic idea is to separate each \textit{atomic} transaction into two types of independent sub-transactions that can be processed sequentially while ensures consistency of the transaction. As a result, \fissions blockchain use two types of blocks that are appended to \fissions blockchain in an alternating manner such that each block contains a set of confirmed sub-transactions with the same type. 

\subsubsection{Adaptive partitioning mechanism}
To the best of our knowledge, none of existing sharding-based solutions adapts the time-varying transactions volumes. More specifically, all the transactions will be distributed to a fixed number of shards. If the transaction volume decreases (during non-peak hours), shards will be underutilized in terms of computation, but both the intra-shard and the inter-shard communication cost remain the same. To this end, we propose an adaptive partitioning mechanism (see Section~\ref{sec:ap}) that can accommodate transaction volumes by dynamically grouping shards in \textit{partitions} and processing transactions in partition basis. As a result, shards' resources can be optimally utilized and the network burden can be minimized.

\subsubsection{\ppap consensus protocol} 
Designing a consensus protocol for a scalable blockchain is very challenging. First, the consensus protocol should be able to avoid Sybil attacks~\cite{douceur2002sybil} - a common attack in open, decentralized environments where an \textit{adversary} (\ie a malicious user) can create multiple identities to influence the protocol. Moreover, the consensus protocol must be scale to high transaction volumes in an energy-efficient way. To address these challenges, we propose probably secure PoS-based consensus protocol, named Parallel Proof-of-Active-Participation (\ppapb, see Section~\ref{sec:ppap}), where \textit{active} (or online) participates (\ie nodes) are randomly selected (with probability proportional to their tokens) reach the consensus on partitions in parallel. We investigate the relationship between security guarantees of \ppap and the system activity and honest threshold.

\subsection{The Hybrid Network}
Due to the huge communication cost of broadcasting messages from one node to all the other nodes in the system and reaching consensus in partitions, we propose a hybrid network to minimize the information propagation latency, which is defined as the combination of transmission time and the local verification of the message. 

As shown in Fig.~\ref{img:hn}, the proposed hybrid network consists of : 1) a relay network, which consists of a small set of \emph{relay nodes} that forward messages (\eg transactions, blocks, \etc) in a many-to-many manner, and 2) a P2P network, where every node only broadcast its own messages to its neighbors, while broadcasting or forwarding the hashes of messages signed by other nodes in a gossip manner.
\begin{figure}[h!]
	\includegraphics[width= \columnwidth]{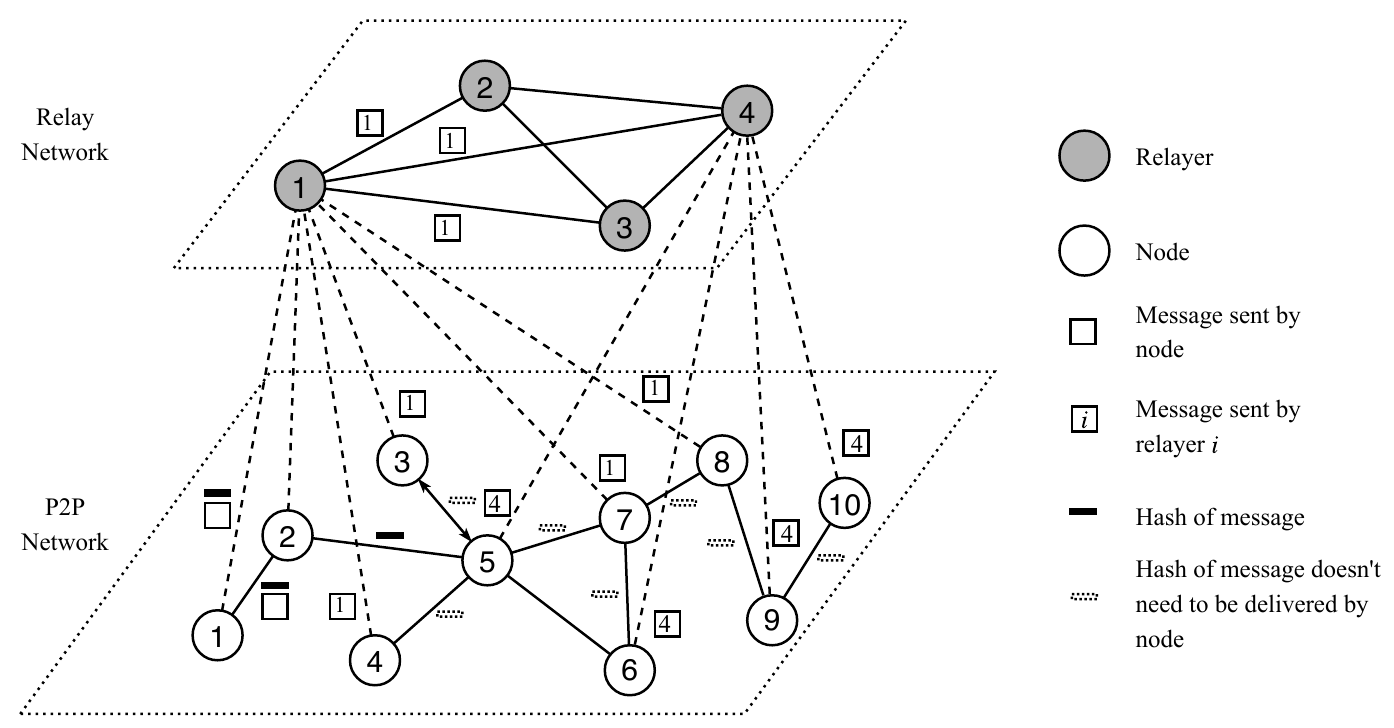}
	\caption{\fissions Hybrid Network.}\label{img:hn}
\end{figure}

\subsubsection{The relay network} 
The purpose of the relay network is to minimize the information propagation latency via reducing the number of hops that messages traverse before they reach all the online nodes. The relay network consists of a set of \emph{inter-connected } \textit{relayers} (servers with high network capacity and good hardware specifications) such that every relayer is \emph{incentivized} (via collecting a portion of transaction fees and block rewards) to deliver every message to a large number of nodes simultaneously. 

Specifically, each node will select one relayer to broadcast messages, and update, synchronize its local copy of the blockchain. To avoid garbage messages that may overwhelm the relay network, every relayer must validate messages (based on their signatures) before relaying them to the whole network. Furthermore, the relay network will forward all the hashes of messages, but only deliver the corresponding messages when nodes request. Due to the system dynamics that incurs dynamic and unpredictable loads on the relayer, it is plausible that nodes can effectively select or change their relayers, such that the overall system will not be vulnerable to those relayers that are under-performing or under DDoS attacks. 

Instead of deploying the relay network and managing the relayer selection in a centralized way, we design and implement the relay network in a \emph{strongly distributed setting}. Specifically, \textsf{Fission} enables every node to \emph{selfishly}  independently and concurrently select relayer that maximizes its own profit (\ie lower propagation latency) without any central control. We prove in Section~\ref{sec:relay:prs} that the expected information propagation latency is minimized once the loads on relayers are balanced proportional to their capacities, in which case a \emph{Nash equilibrium} is achieved. 

It is important to note that without accommodating the system dynamics and network heterogeneity in the relay network, the performance of any relay selection strategy may be detrimental in practice. To this end, we then propose a game-theoretic distributed algorithm, named Probabilistic Relay Selection (PRS) that leads the system to $\epsilon$-Nash equilibrium in an \textit{ultra-fast}\footnote{We follow the common use of the superlative ``ultra'' for double-logarithmic bounds~\cite{cohen2003scale,fountoulakis2012ultra}} $O(\loglog N^{\text{r}} )$ convergence time from any prior system state (\ie load distribution in the relay network), where $N^{\text{r}}$ is the number of relayers.

\subsubsection{The P2P network}
\fissions P2P network is based on the Kademlia DHT~\cite{maymounkov2002kademlia}, which is designed to be an efficient means for storing and finding content in a P2P network. Unlike other blockchains like Ethereum, where the Kademlia DHT is used only for node discovery, \fission fully utilizes the DHT-based P2P network as a highly available \textit{distributed storage system}, similar to IPFS~\cite{benet2014ipfs}. 

Similar to other blockchain P2P networks, transactions are disseminated in \fissions P2P network in a gossip-like manner. Besides, all the nodes will send their transactions to their relayers who help them to reach the whole network in 2 hops. Unlike existing blockchains, \fissions nodes do not forward blocks to their neighbors, although the block hashes are gossiped in the P2P network. In such a way, block information (\ie block hash) can be propagated (with help of the relay network) to the whole network as soon as possible.

Upon receiving a block hash, a node will choose to retrieve the block if it is selected as a committee member for the block. Otherwise it will ignore it (to effectively utilize the network resource). In \fissionb, every node can retrieve a block (based on its hash) from a content provider, defined as the node that has a local copy of data, with a lower fee, or from the relay network with a higher fee. In addition, if a node is a committee member for a block, it needs to retrieve the block within a \textit{latency constraint} (measured in seconds), otherwise it may miss the voting process, and as a result, lost the chance to get the block rewards. To minimize the cost of data retrieval with time constraints, we propose an online and light-weight  data retrieval strategy that achieves a good trade-off between performance and complexity.

\section{Eager-Lazy Pipeling}\label{sec:eager-lazy}
Although sharding promises to improve the throughput and reduce per-node processing and storage requirements, existing sharding-based blockchains still require a linear amount of communication per transaction (to sync up every transaction among different shards), and thus attain only partially benefits of sharding. 

To this end, we introduce a novel Eager-Lazy pipeling model to improve the system throughput via transaction pipeling. Specifically, each atomic transaction in \fission is divided into two successive and independent sub-transactions: 1) the \textit{eager} sub-transaction (\eg withdrawing tokens from the payer), and 2) the corresponding \textit{lazy} sub-transaction (\eg depositing tokens to the payee), as shown in Fig.~\ref{img:eager-lazy}. 
\begin{figure}[h!]
	\includegraphics[width=0.9 \columnwidth]{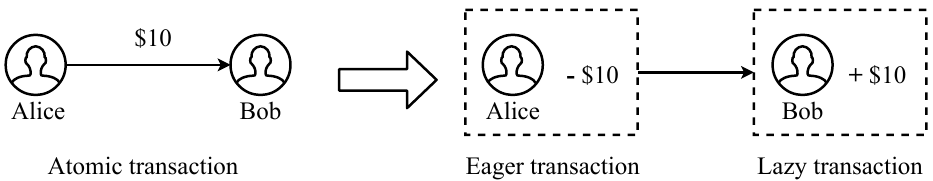}
	\caption{Transform between an atomic transaction and its corresponding eager and lazy sub-transactions.}\label{img:eager-lazy}
\end{figure}

The proposed Eager-Lazy pipeling overcomes the limitations of sharding-base solution via avoiding \ie high cross-shard communication. \fission confirms each atomic transaction via confirming the two sub-transactions \textit{separately} and \textit{independently} in the two successive blocks. In other words, all the eager sub-transactions will be confirmed in a block, referenced by another  block containing all the corresponding lazy sub-transactions. As a result, \fission introduces two types of blocks: 1) the \textit{interim} block that stores all the confirmed eager sub-transactions, and 2) the  \textit{main} block that stores all the confirmed lazy sub-transactions.
\begin{figure}[ht!]
	\includegraphics[width=0.9 \columnwidth]{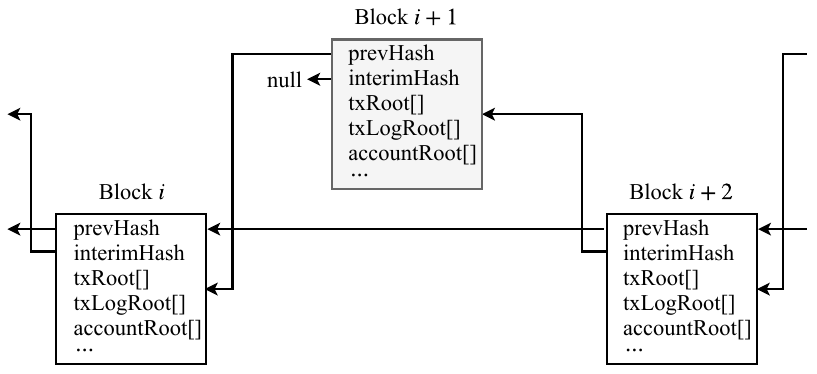}
	\caption{\fissions blockchain structure.The interim and main blocks are appended alternatively.}\label{img:blockchain}
\end{figure}

\fission blockchain proceeds in fixed time periods called \emph{epochs}. In every epoch, a new block that contains a set of sub-transactions is appended to the blockchain. Therefore, the blockchain consists of a sequence of concatenated blocks $<B_0, B_1, \cdots>$, where $i \geq 0$ indicates the epoch number. As shown in Fig.~\ref{img:blockchain}, blocks are appended to the blockchain in an \textit{alternating} way such that $B_i$ is a main block if and only if $i\mod  2 = 0$, otherwise it is an interim block.  

\subsection{Micro Block}\label{sec:eager-lazy:micro}
Note that all the sub-transactions will be processed in parallel with the proposed adaptive partitioning mechanism (explained in Section~\ref{sec:ap}). Upon reaching a consensus, a \textit{micro} block that contains all the confirmed sub-transactions in a partition will be generated and broadcast to all the partition committee members. Otherwise, the corresponding transactions will be processed in the next round.
\begin{figure}[ht!]
	\includegraphics[width=0.8 \columnwidth]{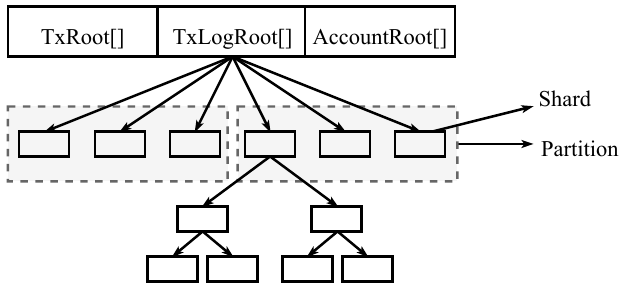}
	\caption{\fissions block structure.}\label{img:block}
\end{figure}
\subsection{Interim Block} 
An interim block is a combination of all the micro blocks that contains the eager sub-transactions. The transactions, the account table, and the transaction logs are distributed in shards (not partitions), simply based on the public keys of the senders. As shown in Fig.~\ref{img:blockchain} and Fig.~\ref{img:block}, each interim block contains three Merkle root arrays, each of which is a array of Merkle root of each shard. It should be noted that not all the eager sub-transactions in a partition can be processed within $\Delta_{\texttt{micro}}$ (\ie, the time constraint for a micro block) due to both the  heterogeneous complexity of transactions (\eg some of them are one-to-many fund transfers, or smart contracts), and the heterogeneous computation capacities of committee members in a partition. An interim block will be generated by the (interim) block committee by combining all the micro blocks from all the partitions within a time constraint $\Delta_{\texttt{interim}}$. Once a consensus is reached, the interim block will be appended to the blockchain, otherwise an empty interim block will be appended.

\subsection{Main Block} 
Once an interim block is appended to the blockchain, a (main) block committee will be selected to reach a consensus on a main block that contains the corresponding lazy sub-transactions within a time constraint  $\Delta_{\texttt{main}}$. Considering that lazy sub-transactions are basically credit operations based on logs of the eager sub-transactions (\eg \texttt{txLogRoot[]}) in the interim block, they are supposed to be processed much faster than eager sub-transactions. Furthermore, all the corresponding lazy sub-transactions in the interim block need to be confirmed in the main block which reference the interim block by \texttt{interimHash} in the block header. Otherwise, an empty main block will be appended.

\section{Adaptive Partitioning}\label{sec:ap}
Allocation of resources to processing transactions in parallel is challenging for blockchains of all sizes. Carving out or allocating parts of the system to run tasks without interfering with each other is commonly referred to as ``\textit{partitioning}''. Partitioning, in general, is the ability to divide up system resources into groups of parts in order to facilitate particular functions. Recently, some partitioning scheme have been proposed which partition a blockchain into persistent or static partitions (\ie shards). However, the capacities of the parallel partitions would be \textit{underutilized} if only a fixed number of transactions were to be proceeded on the fixed number of shards. In order to perform efficient scheduling of resources, we propose an \textit{adaptive partitioning mechanism} to process transactions parallelly in different \textit{partitions}, instead of shards. System performance can be further improved by adaptively determining the number of shards allocated to a partition based on the transaction volumes.

Each partition consists of one or multiple \textit{shards}, each of which consists of eager or lazy sub-transactions, account table, and transaction logs assigned to it based on the public keys of senders and accounts, respectively. The sub-transactions in each partition will be processed and a micro block (see Section~\ref{sec:eager-lazy:micro}) will be generated and agreed by a small group of nodes, called \textit{partition committee}, randomly selected using \ppap protocol based on the latest block and the partition index. Specifically, let $j$ be a shard index, then it will be assigned to a partition with index $k\equiv j \mod N^p$, where $N^p \leq N^s$ is the total number of partitions. The value of $N^p$ is contained in every block header, and it is adjusted based on the latest transaction volumes.

\subsection{Number of Partitions}
The purpose of the proposed adaptive partitioning mechanism is to reach consensus on the final block as soon as possible respecting the fluctuation of transaction volumes. The larger the number of transactions is, more partitions are required such that partitions are not overloaded. In \fissionb, a simple auto-scaling strategy is applied to determine the number of partitions as follows: Let $N^e_i$ and $N^p_i$ be the number of confirmed eager sub-transactions, and the number of partitions in the latest block $B_i$, respectively. The number of partitions for the next block $B_{i+1}$, denoted by $N^p_{i+1}$ is derived as following:
\begin{equation}\nonumber
N^p_{i+1} =
\begin{cases}
N^p_{i} + 1, & \text{if } \frac{N^e_i}{N^p_{i} } \geq \delta N^e_{\max} \\
N^p_{i}  - 1, & \text{if } \frac{N^e_i}{N^p_{i} } \leq (1-\delta)N^e_{\max} \\
N^p_{i} ,       & \text{otherwise}
\end{cases},
\end{equation}
where $N^e_{\max}$ is the maximum number of sub-transactions a partition can  process, and $\delta \in (0, 1)$ is a scale factor. Both $N^e_{\max}$ and $\delta$ are pre-determined in \fissionb, and they can be adjusted for the best practice.

\subsection{Re-Sharding}\label{sec:ap:re-sharding} 
Note that the state and storage information of nodes are organized using Merkle tree structure, which is essentially a binary tree. It is straightforward and efficient to increase the number of shards by splitting a Merkle tree into two or more Merkle trees. Re-sharding will be triggered automatically if there are a number of successive main blocks, say $(B_i, B_{i+1},\dots, B_{i+N^{rs}})$, we have $N^p_{j} = N^s_j$, for all $j\in[i, i+N^{rs}]$, where $N^s_j$ denotes the number of shards in block $B_j$, and $N^{rs}$ is a pre-determined number.

\section{PPAP Consensus}\label{sec:ppap}
Byzantine agreement protocols, \eg PBFT\cite{castro2002practical},  have been used to sync up states among a relatively small group of servers, and it has been used in other blockchains to reach consensus. However, PBFT requires a fixed number of servers acting like bootstrapping nodes thus those blockchains may be vulnerable to Sybil attacks. Moreover, it does not scale to a large number of nodes (say over 100,000 nodes) who are participating in the consensus process. 

To scale the \textsf{Fission}'s consensus process to many online nodes, a small group of nodes are selected (see Section~\ref{sec:ppap:committee-selection}) as the committee for each partition and block at each epoch. Once the nodes are selected and their voting power are determined, a Byzantine agreement protocol is executed by every selected node (\ie every committee member) to reach a consensus on the new block.

\textsf{Fission}'s consensus process is purely decentralized such that every node can be selected as a committee member. To avoid Sybil attacks, where adversaries may create millions of nodes with negligible tokens to increase the probability of being selected, and thus increase the probability of appending malicious blocks, \fission enables nodes to be selected randomly with probability proportional to their tokens, and their voting power are proportional to their tokens. As a result, \fissions consensus protocol is a PoS-based protocol. 

\subsection{Active Participation}
Furthermore, \fissions consensus protocol relies on the \textit{active participation} of nodes, who improve the security and performance of the system by verifying transactions and voting for the new blocks to be appended to the blockchain. It is worth to note that all the blockchains require active participation from their users. Without the participation of users, the security of blockchains cannot be guaranteed. Therefore, \fissions consensus protocol is named as Parallel Proof-of-Active-Participation, \ppap in short, where the security of \fission is guaranteed by online nodes who are actively participating the transactions/blocks verification and block producing. 

\subsection{Committee Selection} \label{sec:ppap:committee-selection}
In \fissionb, committees are randomly selected from all the online nodes at each epoch $e$ in a non-interactive manner, with probability proportional to their tokens (\ie stakes). Let $V_e=\cup_{i\in \mathbb{Z}^{+}} \{v_i\}$ and $S_e=\cup_{i\in \mathbb{Z}^{+} } \{ s_i\}$ denote the set of all the active nodes, and the set of their tokens at epoch $e\in \mathbb{Z}^{+}$, respectively. Consider every node, say $v_i$ performs a Bernoulli trial on every token it has. Let $p$ be the probability of success in the Bernoulli trial. Then the probability of being selected is $1-(1-p)^{s_i}$, and the expected voting power is $p\times s_i$. Therefore, the voting power of node $v_i$, defined as $o_i$, is a random variable follows Binomial distribution, \ie $o_i\sim \mathcal{B}(s_i, p)$. The cumulative distribution function (CDF) for $o_i$ is defined as the following:
\begin{align}\label{eq:ppap:vote-cdf}
F(k; s_i, p) &= \Pr(o_i \leq k) = \sum_{j=0}^k \binom{s_i}{j}p^j(1-p)^{s_i-j}
\end{align}

To generate the $o_i$ given $s_i, p$ for each $v_i\in V_e$, we use a simple algorithm based on the inverse transform method \cite{gentle2006random}, as shown in Fig.~\ref{img:inverse-cdf} and Algorithm~\ref{alg:ppap:cal-vote}.
\begin{figure}[ht!]
	\includegraphics[width=0.6 \columnwidth]{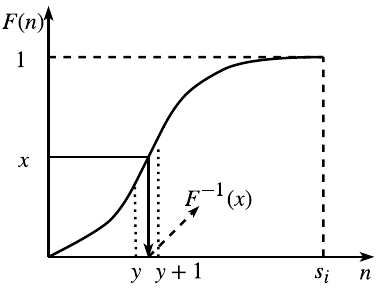}
	\caption{Calculating voting power of $v_i$ using inverse transform method. $x$ is random number uniformly distributed in $[0,1]$, and $y$ is the maximum integer such that $y \leq F^{-1}(x)$. Then the voting power of $v_i$ is $y$.}\label{img:inverse-cdf}
\end{figure}
Specifically, every node will generate a random number, say $x$ from a uniform distribution on the interval $(0,1)$ (Line 1 of Algorithm~\ref{alg:ppap:cal-vote}). Then it compares $x$ with $F(0; s_i, p)=(1-p)^{s_i}$ (\ie the probability that none of $s_i$ tokens is selected) and stop and return $0$ if $x$ is not greater, as shown in Lines 2-3 of Algorithm~\ref{alg:ppap:cal-vote}. The inverse CDF (\ie $F^{-1}(x)$) is used to calculate the voting power of $o_i$ given the random number $x$, and $s_i, p$.
\begin{algorithm}[ht!]
	\caption{Calculating voting power given ($s_i, p$).}\label{alg:ppap:cal-vote}
	Generate x from uniform distribution $X\sim \mathcal{U}(0,1)$ \; 
	\If{ $x \leq F(0; s_i, p)$ }{Return 0\;}
	Return $\sup \{y\in \mathbb{Z}^{+}: F(y) \leq F^{-1}(x) \}$\;
\end{algorithm}

Every node in \fission uses VRFs~\cite{micali1999verifiable} to generate random numbers to calculate their voting power. Because VRFs allows other node to verify those numbers efficiently. Furthermore, it enables nodes to calculate their voting power in a non-interactive and independent manner. Let $B_e$ denote the block at epoch $e$, and let $seed_{e}$ denote the seed information used by all online nodes to calculate their voting power for the next block $B_{e+1}$. Every node, say $v_i\in V_e$, will generate a tuple using VRF as follows:
\begin{equation}
(hash_i^e, \pi_i^e) =VRF(sk_i, seed^e, type),
\end{equation}
where $hash_i^e$ is a 256-bit long pseudo-random value that is uniquely determined by $sk_i$ (private key of node $v_i$), $seed^e$ (random seed of block $B_e$), and the type of committee (\eg partition or block committee), but is indistinguishable from random to any node that does not know $sk_i$. $\pi_i^e$ is the proof that anyone can verify that $hash_i^e$ is valid with the knowledge of $pk_i$, the public key of node $v_i$. Therefore, the random number that is used to calculate voting power of node $v_i$ given $s_i, p$ is $hash_i^e / 2^{256}$.

Considering that it is impractical to assume $V_e$ or $n=|V_e|$ is known as a prior, we choose $p$ as a pre-determined value which does not depend on dynamic, unpredictable information. Specifically, $p=\tau / K$, where $K$ is the total number of tokens, and $\tau$ is the expected number of tokens selected. Choosing a good $\tau$ is very important as it affects both the performance and the security guarantee of the consensus algorithm, and we will discuss it in Section~\ref{sec:ppap:security}. 

\subsection{Byzantine Agreement}\label{sec:ppap:ba}
\fissions Byzantine agreement works roughly as follows: for each block committee, a \textit{block proposer} will be elected to propose a new block. Specifically, at the beginning of every epoch $e$, every block committee member $v_i\in V_e$ generates a ticket as follows: 
\begin{align}
ticket_i=VRF(sk_i, seed^e,``\text{leader}")
\end{align}
Then block committee members then gossip their tickets with each other for a time $\Delta_{\text{leader}}$, after which they elect the valid ticket with the lowest value they have seen and accept the corresponding node as the block proposer, \ie \textit{leader}. If this leader is or becomes unavailable, leadership passes to the next node in ascending order of tickets. Then the block proposer will broadcast the hash of the candidate block to the whole network. Upon received this hash, every block committee member will retrieve the block (from its relayer or other nodes) before verifying all the transactions in the block, and then broadcast its votes with its signature. Once the block proposer receive enough votes (\ie $\geq \theta$, see Section~\ref{sec:ppap:security} for more details) on the new block, it will gossip the confirmed block to the whole network, where each node will append this block to its local copy of the blockchain.

\subsection{Security Guarantee}\label{sec:ppap:security}
Let $S_h$, and $S_a$ denote the total number of tokens owned by the online honest nodes, and adversaries (\ie malicious users), respectively. Let $h$ be the \textit{honesty threshold}, indicating the fraction of tokens that are owned by honest nodes, \ie $S_h = h(S_h+S_a) $.  Let $\alpha$ be the \textit{activity of system}, defined as the fraction of tokens that are owned by the online nodes. Therefore, we have $(S_h+S_a) = \alpha K$. 

Let $X_h$ and $X_a$ denote the number of tokens selected (via executing Algorithm~\ref{alg:ppap:cal-vote}) from honest nodes, and adversaries, respective. As discussed in Section~\ref{sec:ppap:committee-selection}, we have $X_h \sim \mathcal{B}(S_h, p)$, and $X_a \sim \mathcal{B}(S_a, p)$, that is both $X_h$ and $X_a$ follow the binomial distribution. Note that $S_h$ and $S_a$ are \textit{sufficient large}\footnote{Which means $S_hp > 5, S_ap > 5$}, thus both $\mathcal{B}(S_h, p)$ and $\mathcal{B}(S_a, p)$ can be approximated with Normal distribution as follows:
\begin{equation}
X_h \sim \mathcal{N}(\mu_h, \sigma_h^2), \mu_h=S_hp,  \sigma_h^2 = S_hp(1-p)
\end{equation}
\begin{equation}
X_a \sim \mathcal{N}(\mu_a, \sigma_a^2), \mu_a=S_ap,  \sigma_a^2= S_ap(1-p)
\end{equation}

Define $Y=X_h - 2X_a$, and it also follows Normal distribution, \ie $Y\sim N(\mu_y, \sigma_y^2)$, where $\mu_y= (3h-2)\alpha\tau$, and $\sigma_y^2 = (4-3h)\alpha\tau$. To ensure the consensus is reached on valid blocks, the fraction of selected tokens are owned by adversaries should be less than 1/3. Therefore, the probability $\Pr(Y\leq 0)$ should be negligible given $h$ and $\alpha$. We transform $Y$ to a random variable $Z$, which follows a standard normal distribution,\ie
$Z\sim N(0,1)$, then we have the following:
\begin{equation}
\begin{split}
\Pr(Y\leq 0) &= \Pr\left(Z\leq -\frac{\mu_y}{\sigma_y}\right) \\
&\approx \Pr\left(Z \leq -\frac{(3h-2)\alpha\tau}{\sqrt{(4-3h)\alpha\tau}} \right).
\end{split}
\end{equation}
Considering $\Pr(Z\leq -6.36) \leq 10^{-10}$, hence we have
\begin{equation} \label{eq-ppap-tau}
\tau > \frac{40.5(4-3h)}{(3h-2)^2\alpha}
\end{equation}

To ensure that a consensus on a valid block can be reached in a non-interactive manner, the fraction of weights (\ie votes), denoted by $\theta$, needs to be pre-determined such that a block is valid with high probability (\ie $> 1-10^{-10}$) if there are more than $\theta \tau$ votes are received on the block. Hence $\Pr(X_a\geq \theta\tau)$ and $\Pr(X_h < \theta\tau)$ should be negligible, even the activity of system (\ie $\alpha$) is relative low, as shown below:
\begin{equation}
\begin{split}
\Pr(X_a \geq \theta\tau) &= \Pr\left(Z \geq \frac{\theta\tau - \mu_a}{\sigma_a}\right) < 10^{-10} \\
\Pr(X_h < \theta\tau) &= \Pr \left( Z < \frac{\theta\tau -\mu_h}{\sigma_h} \right) < 10^{-10},
\end{split}
\end{equation}
where $Z\sim N(0,1)$ is a random variable follows standard normal distribution. Considering $\Pr(Z > 6.36) < 10^{-10}$ and $\Pr(Z < -6.36) < 10^{-10}$, hence we have
\begin{equation}\label{eq-ppap-theta}
(1-h)\alpha + 6.36\sqrt{\frac{(1-h)\alpha}{\tau}} < \theta  < h\alpha - 6.36\sqrt{\frac{h\alpha}{\tau}}
\end{equation}

Note that one of most secure and well-adopted blockchain, Bitcoin, is still vulnerable to attacks (\eg selfish mining\cite{eyal2018majority}) if adversaries control over $25\%$ of hashrate. We assume that the honest threshold of the system, $h$, should be larger than $0.75$. Otherwise \fission will be vulnerable to attacks by adversaries. As the right part of Eq.~\ref{eq-ppap-tau} is a monotone and non-increasing function over $h$, we have $\tau > \frac{1134}{\alpha}$ given $h\geq 0.75$. To satisfy Eq.~\ref{eq-ppap-theta} under the same assumption of $h\geq 0.75$, $\tau$ should be relative large to be adaptive to different activity of system, \ie $\alpha$. However, a larger $\tau$ results in a higher communication cost, thus $\tau$ needs to be determined considering the trade-off between the communication cost and security. Specifically, based on Eq.~\ref{eq-ppap-theta} and Eq.~\ref{eq-ppap-tau}, we have the lower bound of $\theta$ given $h\geq 0.75$ and $\alpha \in [0,1]$:
\begin{equation}
\begin{split}
\theta &> (1-h)\alpha + 6.36\sqrt{\frac{(1-h)\alpha}{\tau}} \\
&> \frac{1}{4} + \frac{3.18}{\sqrt{\tau}}.
\end{split}
\end{equation}
In \fission we set $\tau=5000$, and $\theta=0.3$. Therefore, to satisfy the right inequation of Eq.~\ref{eq-ppap-theta}, we have
\begin{equation}
h\alpha - 6.36\sqrt{\frac{h\alpha}{5000}} > 0.3,
\end{equation}
and thus $\alpha > 0.53$. Therefore, given $h \geq 0.75$, $\alpha > 0.53$, and $\tau=5000$, a consensus on a valid block can be reached \whp if over $1500$ votes are received.

\section{The Relay Network}\label{sec:relay}
The relay network of \fission is a mesh network consists of a group of relayers who contribute their resources (\eg bandwidth, memory and computation) to propagate messages (\eg transactions, blocks, \etc). With relayers, all messages can be delivered within at most 3 hops, thus the information propagation latency is significantly reduced.

The relay network in \fission is implemented in a strong practical and distributed settings: 1) there are no oracle or centralized authorities managing or controlling the relayers, 2) each relayer has no knowledge (\eg load, number of connections, \etc) about other relayers, and 3) relayers have heterogeneous bandwidth capacities and hardware specifications, and both can vary over time. 

To achieve the optimal propagation latency in the distributed settings, nodes need to change their relayers from overloaded ones to underloaded ones independently and concurrently (see Section~\ref{sec:relay:relay-update}). However, It can be observed that if all nodes behave greedily at each step (\ie, they select those relayers with minimum load deterministically), the system may not converge to a steady state \cite{Berenbrink:2006:DSL:1109557.1109597}. To guarantee that the system will converge rapidly, some probability rules need to be imposed for all nodes when they decide to change their strategies at each step. In \fission, we proposed a \textit{probabilistic} relayer selection algorithm, called PRS (see Section~\ref{sec:relay:prs}), which leads the system converge to an $\epsilon$-Nash equilibrium (see Section~\ref{sec:relay:min-delay}) where a near-optimal propagation latency is achieved. 

\subsection{Relay Selection and Update}\label{sec:relay:relay-update}
After joining the system, every node $v_i\in V$ will randomly select a relayer, denoted by $m_j\in M$ with probability proportional to their advertised bandwidth capacities, and then it will invoke a periodic relay update process to choose a \textit{better} relayer every $\Delta_\texttt{relay}$ seconds (the update interval). 

At each step, node $v_i$ randomly selects a different relayer, defined as the relay candidate, in the system. Node $v_i$ will potentially replace the current relayer with the relay candidate using Algorithm~\ref{alg:relay:prs}, detailed in Section~\ref{sec:relay:prs}. It is worth noting that every node will incur a certain \textit{load} on its relayer. The \textit{load} of $m_j \in M$, denoted by $l_j$, is defined as the number of nodes that choose $m_j$ as their relayer, and the \textit{load ratio} of $m_j$, denoted by $r_j$, is defined as $l_j/u_j$.

\subsection{Minimizing Propagation Delay}\label{sec:relay:min-delay}
The key performance metric for information propagation in bandwidth constrained networked environments is the averaged propagation latency. In \fission, every node will send the message (transactions and blocks) and its hash to the relayer, who then forward the message to all the connected nodes who did not receive the message yet. Therefore, the distribution of loads on the relayers will significantly affect the propagation delay.

Considering that relayers have heterogeneous and time-varying resources in practice, we investigate the problem of minimizing expected information propagation delay respecting the heterogeneous bandwidth constraints, as shown in Theorem~\ref{theorem:relay}. We then propose a game-theoretic relay selection algorithm, \ie PRS, to minimize the averaged information propagation delay in a fully distributed manner.

\begin{theorem}\label{theorem:relay}
	Let $V=\cup_{i\in \mathbb{Z}^+} \{ v_i \}$ be the set of online nodes. Let $M=\cup_{i\in \mathbb{Z}^+} \{ m_i \}$ and $U=\cup_{i\in \mathbb{Z}^+} \{ u_i \}$ be the set of relayers and their advertised upload capacities, respectively. Given any distribution of load ratios on the relayers, $\mathbf{r}$, the expected averaged information propagation delay, denoted by $\E(\mathbf{r})$, is minimized if and only if $\forall r_i\in \mathbf{r}, r_i=\overline{r}=\frac{|V|}{|U|}$.
\end{theorem}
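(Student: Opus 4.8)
The plan is to reduce the statement to a single application of the Cauchy--Schwarz inequality, after writing $\E(\mathbf{r})$ explicitly in terms of the relayer loads. First I would make the transmission model from Section~\ref{sec:relay:min-delay} precise: a broadcast message travels node~$\to$~source relayer~$\to$~relay mesh~$\to$~destination relayer, and the destination's relayer $m_j$ must push the message over its upload link of capacity $u_j$ to all $l_j$ nodes it serves, so the time it contributes is proportional to $l_j/u_j = r_j$, while the node-to-relayer and relayer-to-relayer hops contribute delays independent of the load distribution $\mathbf{r}$. Averaging the per-destination delay over a uniformly chosen destination node weights relayer $m_j$ by its own load $l_j$, so the $\mathbf{r}$-dependent part of the objective is
\begin{equation}\nonumber
\E(\mathbf{r}) = c_0 + \frac{c_1}{|V|}\sum_{j} l_j\, r_j = c_0 + \frac{c_1}{|V|}\sum_{j}\frac{l_j^2}{u_j},
\end{equation}
with $c_0,c_1>0$ independent of $\mathbf{r}$ and $l_j = r_j u_j$. (Under the natural normalization $c_0=0$, $c_1=1$ this says the optimal per-node latency equals $\overline{r}$, consistent with the ``near-optimal'' language.)

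Next I would record the only constraint: every online node picks exactly one relayer, so $\sum_j l_j = |V|$, equivalently $\sum_j u_j r_j = |V|$. In particular $\overline{r} = |V|/\sum_j u_j$ is exactly the $u_j$-weighted average of the $r_j$, and it is fixed no matter how nodes distribute themselves.

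The core step is then the bound
\begin{equation}\nonumber
|V|^2 = \Big(\sum_j l_j\Big)^2 = \Big(\sum_j \frac{l_j}{\sqrt{u_j}}\cdot \sqrt{u_j}\Big)^2 \le \Big(\sum_j \frac{l_j^2}{u_j}\Big)\Big(\sum_j u_j\Big),
\end{equation}
which gives $\sum_j l_j^2/u_j \ge |V|^2/\sum_j u_j = \overline{r}\,|V|$, hence $\E(\mathbf{r}) \ge c_0 + c_1\overline{r}$. Cauchy--Schwarz is tight if and only if $(l_j/\sqrt{u_j})_j$ and $(\sqrt{u_j})_j$ are parallel, i.e.\ $l_j/u_j = r_j$ is the same value $c$ for all $j$; substituting into $\sum_j u_j r_j = |V|$ forces $c=\overline{r}$. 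So $\E(\mathbf{r})$ attains its minimum exactly at $\mathbf{r}=(\overline{r},\overline{r},\dots)$, which is both directions of the ``if and only if''. Equivalently one could argue by convexity of $t\mapsto t^2$ via Jensen, or by Lagrange multipliers on $\min\sum_j l_j^2/u_j$ subject to $\sum_j l_j=|V|$; all identify the same optimum. (Strictly, the $l_j$ are integers, so $r_j=\overline{r}$ may be infeasible; the theorem is the idealized fractional statement, matching the $\epsilon$-Nash guarantee of PRS.)

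The main obstacle is not the optimization, which is a one-line inequality, but justifying the reduction in the first paragraph: one must argue that the load-dependent component of end-to-end latency is proportional to $\sum_j l_j^2/u_j$, and that the remaining hops contribute load-independent constants. This needs the transmission-time-dominated model (relayers serve all their clients over a capacity-$u_j$ link, and the relay mesh delivers in $O(1)$ hops), plus the observation that averaging over destinations is what converts the linear quantity $\sum_j r_j$ (whose minimizer would be the degenerate all-load-on-the-fattest-relayer configuration) into the strictly convex quantity $\sum_j l_j^2/u_j$, whose minimizer is the balanced profile. If instead the intended metric were the worst-case whole-network latency $\max_j r_j$, the conclusion is even more immediate: since $\overline{r}$ is a weighted average of the $r_j$ we always have $\max_j r_j \ge \overline{r}$, with equality iff all $r_j$ coincide.
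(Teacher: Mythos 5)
Your proposal is correct, and it sets up the same optimization as the paper: both reduce the claim to minimizing $\sum_j l_j^2/u_j$ subject to $\sum_j l_j = |V|$, using the same modeling step (relayer $m_j$ handles a fraction of traffic proportional to $l_j$ and serves each message to $l_j$ clients over a capacity-$u_j$ link, giving a per-relayer contribution $\propto l_j^2/u_j$). Where you diverge is the final step: the paper invokes the method of Lagrange multipliers and simply states that the stationarity condition is $l_i/u_i = l_j/u_j$ for all $i,j$, while you close with Cauchy--Schwarz, $\bigl(\sum_j l_j\bigr)^2 \le \bigl(\sum_j l_j^2/u_j\bigr)\bigl(\sum_j u_j\bigr)$, together with its equality condition. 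Your route is the stronger one for the theorem as stated: the Lagrange argument only identifies a critical point and, as written, establishes the ``if'' direction (one must separately appeal to convexity to rule out other minimizers and to get ``only if''), whereas the tightness condition of Cauchy--Schwarz delivers both directions of the ``if and only if'' in one stroke. Your explicit remarks on the integrality of the $l_j$ and on which hops are load-independent are also caveats the paper glosses over; neither affects correctness, but they make the reduction honest.
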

\begin{proof}
	As every node selects a relayer u.a.r., the number of messages to be propagated by a relayer is proportional to its load (\ie number of nodes that select it as the relayer). Therefore, the expected number of message to be propagated by the relayer $l_j/|V|$. For every message, $m_j$ is supposed to forward to $l_j$ nodes, thus the propagation delay incurred on relayer $m_j$ is $\frac{\overline{w} l_i^2}{u_i |V|}$, where $\overline{w}$ denotes the averaged message size. Furthermore, the problem of minimizing the expected propagation delay for any node $v_i\in V$ is formulated as the following
	\begin{align}
	\text{minimize} &: \hspace{10pt} \E(\mathbf{r}) = \frac{\overline{w}}{|V|} \sum_{m_i\in M}\frac{l_i^2}{u_i}, \\
	\text{s.t.} &: \hspace{10pt} \sum_{m_i\in M}l_i=|V|,
	\end{align}
	By the method of Lagrange multipliers, $\E(\mathbf{r})$ is minimized if $\frac{l_i}{u_i} = \frac{l_j}{u_j}, \forall m_i, m_j \in M$, and this completes the proof.
\end{proof}
Theorem~\ref{theorem:relay} implies that to minimize the expected propagation delay, the loads on the relayers must be balanced proportionally to their advertised upload bandwidth. To achieve a scalable system, our goal is to distribute and balance the loads on the relayers in a fully distributed manner. We then formulate the distributed load balancing problem as an \textit{asymmetric congestion game}, where each player (\ie node) in the game can change his/her strategy (\ie relayer) individually
and concurrently.

\subsection{Congestion Game Preliminaries}
The classical congestion games have been investigated for many years. A congestion game, denoted by $G$, can be defined as a tuple $(V,M,(A_i)_{i\in V},(f_e)_{e\in M})$, where $V$ denotes the set of \emph{players} and $M$ denotes a set of \emph{facilities}. Each player $i\in V$ is assigned a finite set of \textit{strategies} $A_i$ and a cost function $f_e$ is associated with facility $e\in M$.

To play the game, each player $i\in V$ selects a strategy $a_i\subseteq A_i$, where $A_i\subseteq M$ is the strategy set of player $i$. The strategy profile, denoted by $\mathbf{a} = (a_i)_{i\in V}$, is defined as a vector of strategies selected by all the players. Similarly we use the notation $\mathcal{A} = \times_{i\in V}A_i$ to denote the set of all possible strategy profiles. A congestion game is \emph{symmetric} if all the players have the same strategy set, \emph{i.e.}, $\forall i,j\in V, A_i=A_j$; otherwise it is \emph{asymmetric}. A congestion game is  \emph{weighted} if each player $i$ is specified a weight $w_i$. The cost of player $i$ for the strategy profile $\mathbf{a}$ is given by $c_i(\mathbf{a}) = f_{a_i}(\mathbf{a}, w_i)$.

The goal of each player in congestion games is to minimize her own cost without trying to optimize the global situation. That is, all players will try to lower their own cost by changing their strategies individually. A \emph{pure Nash equilibrium} is defined as a steady state in which no players have an incentive to change their strategies.
\begin{definition}[Pure Nash equilibrium~\cite{nisan2007algorithmic}]
	A strategy profile $\mathbf{a}\in \mathcal{A}$ is said to be a pure Nash equilibrium of $G$ if for all players $i\in V$ and each alternate strategy $a_i^{\prime}\in A_i$,
	\begin{align}
	c_i(a_i, \mathbf{a}_{-i}) \leq c_i(a_i^{\prime}, \mathbf{a}_{-i}),
	\end{align}
	where $\mathbf{a}_{-i}=(a_j)_{j\in V\setminus\{i\}}$ denotes the list of strategies of the strategy profile
	$\mathbf{a}$ for all players except $i$.
\end{definition}

Congestion games have a fundamental property that a pure Nash equilibrium always exists~\cite{fabrikant2004complexity}. To analyze convergence properties of congestion games (\emph{i.e.}, the time from any state to a pure Nash equilibrium), we introduce the definition of potential function for congestion games.
\begin{definition}[Potential function~\cite{even2005fast}]
	A function $\Phi: \mathcal{A} \rightarrow R$ is a potential for game $G$ if $\forall \mathbf{a}\in \mathcal{A}$, $\forall a_i,a_j\in A_i$,
	\begin{align}\label{eqn:po}
	c_i(a_i, \mathbf{a}_{-i}) \leq c_i(a_j, \mathbf{a}_{-i}) \Rightarrow \Phi(a_i, \mathbf{a}_{-i}) \leq \Phi(a_j, \mathbf{a}_{-i}).
	\end{align}
\end{definition}
Therefore, $\mathbf{a}^{*}$ is a Nash equilibrium if and only if
\begin{align}
\mathbf{a}^{*} = \argmin_{\mathbf{a}\in \mathcal{A}}\Phi(\mathbf{a}). \label{alg:nash}
\end{align}

An $\epsilon$-Nash equilibrium is an approximate Nash equilibrium, which is defined as a state in which no player can reduce her cost by a multiplicative factor of less than $1-\epsilon$ by changing her strategy.

We define a potential function  to associate the expected propagation delay with $\mathbf{r}=(r_1, r_2, \dots)$, which is the load ratio vector on the relayers:
\begin{equation}
\Phi(\mathbf{r})=\sum_{r_i\in \mathbf{r}}^{|M|}(r_i-\overline{r})^2,
\end{equation}
where $\overline{r}=|V|/|U|$ is the optimal load ratio. The potential function has the property that if node $v_i$ switches from a relayer with a high load ratio to another relayer with a low load ratio, $\Phi(\mathbf{r})$ will decrease accordingly. 

Note that every node is selfish and will change its strategy to lower its own cost at each step. This will result in \textit{Nash dynamics} -- \ie $\Phi(\mathbf{r})$ will fluctuate over time. To guarantee that the system converges to a steady state rapidly, a set of probability distributions over $A_i$ (i.e., strategy set) needs to be assigned to each node $v_i\in V$. That is, all the nodes' strategies are nondeterministic and are regulated by a probabilistic rule.

\subsection{Probabilistic Relayer Selection (PRS)}\label{sec:relay:prs}
We now present the probabilistic relay selection (PRS) algorithm based on an asymmetric congestion game. PRS enables each node to update its relayer as follows (shown in Algorithm~\ref{alg:relay:prs}). At each step, node $v_i$ that selects a relayer $m_j$ will contact a relay candidate $m_k$ with probability proportional to its advertised upload bandwidth. At the same time it finds the load and the capacity of $m_k$ (i.e., the relay candidate). Let $r_j$ and $r_k$ be the load ratios of $m_j$ and $m_k$, respectively. If $r_j > r_k$, then $v_i$ replaces $m_j$ with $m_k$ with a probability, denoted by $\Pr(j,k) $, shown as below:
\begin{align}\label{eq:relay:relay-selection}
\Pr(j,k) = \left\{\begin{array}{l l}
\frac{u_i}{|U|}\left( 1-\frac{r_k}{r_j} \right) \ \ \ & \mbox{if $r_j > r_k$,} \\
0 \ \ \ & \mbox{if $j\neq k, r_j \leq r_k$,} \\
1-\sum_{j\neq k}\Pr(j,k)                                             \ \ \ & \mbox{if $j=k$.}
\end{array} \right.
\end{align}
where $u_k$ is the advertised upload capacity of $m_k$, and $|U|$ is the total advertised upload capacities of all the relayers.

For very relayer $m_k\in M$, the probability of being selected as a candidate relayer is $\frac{u_k}{|U|}$. Our solution is to maintain an \emph{identifier space}, which can be managed either by a distributed scheme (e.g., using DHT), or by a centralized scheme (i.e., using a server to store all the identifiers). Each identifier consists of a relayer ID (\ie, its name or IP address) and a randomly generated hash key. To let a relayer to be chosen probabilistically proportional to its upload capacity, each relayer $m_k$ will maintain $\frac{u_k}{\mu}$ identifiers, where $\mu$ is a scalar value measured by Kbps. Therefore, at each step of the relayer update process, node $v_i$ will choose an identifier from the identifier space u.a.r. (uniformly at random), hence, the probability of relayer $m_k$ to be contacted by $v_i$, is proportional to its upload capacity $u_k$.
\begin{algorithm}[ht!]
	\DontPrintSemicolon
	\caption{Probabilistic Relayer Selection} \label{alg:relay:prs}
	$\forall v_i\in V$ that selects $m_i\in M$, do the following\;
	Select and contact $m_k\in M$. \; \label{alg:select}
	\While{$m_k$ is unreachable}{Go to line \ref{alg:select}\;}
	Let $r_j$ and $r_k$ be the load ratios of $m_j$ and $m_k$, respectively\;
	\If{$r_j > r_k$}{
		Replace $m_j$ with $m_k$ with probability $\left( 1 - \frac{r_k}{r_j} \right)$\;
	}
\end{algorithm}

An important criteria to evaluate distributed algorithms is the \textit{convergence time}, which is a measure of how fast the algorithm leads the system reaches a steady state. Otherwise system dynamics may significantly degrade the performance. We analyze the upper bound of the convergence time of PRS, where all the nodes behave simultaneously at each step.

Let $\mathbf{r}_t$ and $\mathbf{r}_{t+1}$ be the load ratio vectors of the current step $t$ and the next step $t+1$, respectively. Let $\mathbf{r}_t^i=l_i/u_i$ denote the load ratio of relayer $m_i$ at time $t$. Since we consider a highly dynamic and heterogeneous environment where nodes can join, leave and make transactions at will, we are especially interested to show how fast PRS can lead the system to converge to a $\epsilon$-Nash equilibrium. We observe that $\max\Phi_{\epsilon\text{-Nash}} = \frac{\epsilon^2m}{4}$, where $m$ is the total number of relayers. Hence we say that a $\epsilon$-Nash equilibrium is reached at time $t$ if $\Phi(\mathbf{r}_t)=O(m)$.

\begin{theorem}\label{theorem:plb:convergence}
	Let $m$ be the number of relayers, and let $\Phi(\mathbf{r}_t)$ be the potential function of the relayers at any step $t$. Then, the upper bound on the number of steps that $\Phi(\mathbf{r}_t)$ requires to reach a $\epsilon$-Nash equilibrium is $O(\log\log m)$.
\end{theorem}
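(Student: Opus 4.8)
The plan is a potential-function drift argument showing that one synchronous round of PRS roughly square-roots the normalized potential $\psi_t := \Phi(\mathbf{r}_t)/m$ as long as $\Phi(\mathbf{r}_t)$ exceeds its $\epsilon$-Nash threshold $\Theta(m)$, so that $O(\log\log m)$ rounds drive $\Phi$ down to $O(m)$. Throughout, write $d_j := r_j - \overline{r}$ for the deviation of relayer $m_j$, so $\Phi(\mathbf{r}_t) = \sum_j d_j^2$; call $m_j$ \emph{overloaded} if $d_j>0$ and \emph{underloaded} if $d_j<0$, and note that $\sum_j l_j = |V|$ together with $l_j = u_j r_j$ and $\overline{r} = |V|/|U|$ gives $\sum_j u_j d_j = 0$, so the capacity-weighted over- and under-load exactly balance. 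In a round, each of the $l_j$ nodes sitting on an overloaded $m_j$ independently contacts a candidate $m_k$ with probability $u_k/|U|$ and, when $r_k < r_j$, migrates to it with probability $1 - r_k/r_j$; one such migration shifts $r_j \mapsto r_j - u_j^{-1}$ and $r_k \mapsto r_k + u_k^{-1}$, hence changes $\Phi$ by exactly $-2 d_j/u_j + 2 d_k/u_k + u_j^{-2} + u_k^{-2}$ — a first-order term that is a genuine gain for the dominant overloaded-to-underloaded moves, plus a bounded correction.

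First I would bound $\E[\Phi(\mathbf{r}_t) - \Phi(\mathbf{r}_{t+1}) \mid \mathbf{r}_t]$ by summing the per-move effect over all (node, target) pairs. The expected number of departures from an overloaded $m_j$ is $l_j \sum_{k:\,r_k<r_j}\frac{u_k}{|U|}\!\left(1 - \frac{r_k}{r_j}\right)$; substituting $l_j = u_j(\overline{r}+d_j)$ and summing the first-order gains over all overloaded relayers produces a total of order $\Theta(\Phi(\mathbf{r}_t))$ — so, naively, beneficial migrations would erase the whole potential in one round. What stops this, and what must be charged against the gain, is \emph{overshoot}: many nodes may simultaneously flood the same underloaded relayer $m_k$, carrying its ratio up toward, or past, $\overline{r}$, and the random fluctuation in the number of arrivals feeds back into $\Phi(\mathbf{r}_{t+1})$. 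Here the $u_k/|U|$ factor in the contact probability is doing the essential work: it caps the expected number of arrivals at each $m_k$, and a second-moment (variance-of-arrivals) computation shows the expected overshoot loss is only $\Theta\!\big(\sqrt{m\,\Phi(\mathbf{r}_t)}\big)$. Balancing the $\Theta(\Phi(\mathbf{r}_t))$ gain against this loss yields a one-step recursion of the form
\[
\E\!\left[\Phi(\mathbf{r}_{t+1}) \mid \mathbf{r}_t\right] \;\le\; c_1\sqrt{m\,\Phi(\mathbf{r}_t)} + c_2\,m ,
\]
valid whenever $\Phi(\mathbf{r}_t) \geq \Theta(m)$, for absolute constants $c_1,c_2$.

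To finish, I would iterate this recursion. In terms of $\psi_t=\Phi(\mathbf{r}_t)/m$ it reads $\E[\psi_{t+1}\mid\mathbf{r}_t]\le c_1\sqrt{\psi_t}+c_2$, so while $\psi_t$ is large, $\E[\psi_{t+1}\mid\mathbf{r}_t]\le 2c_1\sqrt{\psi_t}$; taking logarithms and applying Jensen's inequality, $\log\psi_t$ behaves (up to an additive constant) as a supermartingale whose conditional drift contracts geometrically, $\E[\log\psi_{t+1}\mid\mathbf{r}_t]\le \tfrac12\log\psi_t + O(1)$, which unrolls to $\E[\log\psi_t]\le 2^{-t}\log\psi_0 + O(1)$. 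Since any initial load distribution has $\psi_0=\mathrm{poly}(m)$ (the node count $|V|$ and the spread of the capacities $u_i$ being polynomially bounded in $m$), $\E[\log\psi_t]=O(1)$ once $t=O(\log\log m)$, and Markov's inequality then gives $\Phi(\mathbf{r}_t)=O(m)$ — a $\epsilon$-Nash equilibrium — \whp within $O(\log\log m)$ steps. The hard part will be the overshoot bookkeeping of the second paragraph: one must show that the damping factor $\frac{u_k}{|U|}(1-\frac{r_k}{r_j})$ of PRS is calibrated so that the migration gain strictly dominates the variance-driven re-imbalance in \emph{every} configuration, and that each constant is uniform over the heterogeneous, time-varying capacities $u_i$ (equivalently, that the balanced two-level configuration, where all over- and underloaded relayers share a common ratio, is essentially the extremal one); the normalization and supermartingale unrolling that follow are comparatively routine.
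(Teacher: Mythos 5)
Your proposal is correct and follows essentially the same route as the paper: a one-step contraction $\E[\Phi(\mathbf{r}_{t+1})\mid\mathbf{r}_t]\leq(m\Phi(\mathbf{r}_t))^{1/2}$, iterated via logarithms and Jensen to reach $\Phi=O(m)$ in $O(\log\log m)$ steps, finished with Markov's inequality. Your ``gain minus overshoot'' accounting is exactly the paper's bias--variance decomposition, which it isolates as two lemmas: the conditional mean of every load ratio after one round is exactly $\overline{r}$ (so the first-order gain would indeed erase the potential), and the sum of conditional variances -- your overshoot term -- is bounded by $(m\Phi(\mathbf{r}_t))^{1/2}$ via Cauchy--Schwarz; carrying out the second-moment computation you flag as the hard part is precisely the content of Lemma~\ref{lemma:plb:var}.
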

\begin{proof}
	To better structure the proof, we introduce two supporting lemmas provided in the Appendix. Given any step $t$, we observe from Lemma~\ref{lemma:plb:exp} that the expected load ratio of any node at step $t+1$ is the optimal load ratio $\overline{r}$. Lemma~\ref{lemma:plb:var} provides an upper bound of the variance of load ratios at step $t+1$. Consider $\E\left[\Phi(\mathbf{r}_{t+1})\right]$, which is the expected value of the potential function at step
	$t+1$:
	\begin{align}
	\E\left[\Phi(\mathbf{r}_{t+1})\right]
	&=\E\left[\sum_{i=1}^m\left(\mathbf{r}_{t+1}^i-\overline{r}\right)^2\right].
	\end{align}
	Note that all the nodes behave independently at each step (\ie select a relayer with lower load ratio), hence $\mathbf{r}_{t+1}$ is independent of $\mathbf{r}_t$. It follows that
	\begin{align}
	\E\left[\Phi(\mathbf{r}_{t+1})\right]
	&=\sum_{i=1}^m\E\left[\left(\mathbf{r}_{t+1}^i\right)^2\right]
	- 2\overline{r}\sum_{i=1}^m\E\left[\mathbf{r}_{t+1}^i\right] + m\overline{r}^2.
	\end{align}
	
	Lemma~\ref{lemma:plb:exp} indicates that the expectation of each relayer $m_i$, $\E\left[\mathbf{r}_{t+1}^i\right]$, is $\overline{r}$ at each step, hence we have
	\begin{align}
	\E\left[\Phi(\mathbf{r}_{t+1})\right]
	&=\sum_{i=1}^m\E\left[\left(\mathbf{r}_{t+1}^i\right)^2\right] - m\overline{r}^2 \nonumber \\
	&=\sum_{i=1}^m\Var\left[ \mathbf{r}_{t+1}^i\right] \leq \left(m\Phi(\mathbf{r}_t)\right)^\frac{1}{2}.
	\end{align}
	
	We observe that the square-root function is concave. Therefore, by Jensen's inequality, the expected value of the potential function at step $t+1$ is
	\begin{align}
	\E\left[\Phi(\mathbf{r}_{t+1})\right]
	&\leq \E \left[ \left(m\Phi(\mathbf{r}_t)\right)^\frac{1}{2} \right] \leq \left(m \E \left[\Phi(\mathbf{r}_t)\right] \right)^\frac{1}{2}.
	\end{align}
	
	Since $\E\left[\Phi(\mathbf{r}_t)\right] > 0$ for any step $t\geq0$, we define a function $f(t)=\log(\E\left[\Phi(\mathbf{r}_t)\right])$, and we have the following
	\begin{align}
	f(t+1)\leq \frac{\log m}{2} + \frac{f(t)}{2}.
	\end{align}
	
	After $\lambda \in \mathbb{Z}$ steps from any step $t\geq0$,
	\begin{align}
	f(t+\lambda)
	&\leq \left( 1- \frac{1}{2^{\lambda}} \right)\log m + \frac{1}{2^{\lambda}}f(t) \nonumber \\
	&\leq \log m + \frac{1}{2^{\lambda}}f(t).
	\end{align}
	
	Therefore, the expected value of the potential function after $\lambda$ steps from any step $t\geq 0$ is
	\begin{align}
	\E\left[\Phi(\mathbf{r}_{t+\lambda})\right] \leq m\E\left[\Phi(\mathbf{r}_t)\right]^{2^{-\lambda}}.
	\end{align}
	
	The upper bound of $\Phi(\mathbf{r})$ is $O(m^2)$, which occurs in the case when every node $v_i$ selects one specific relayer$m_j\in M$. Hence, the upper bound of $f(t)$ is $O(\log m)$, \ie there exists an integer $a\in \mathbb{Z}$ such that $\sup f(t) \leq a\log m$. Therefore, it suffices to show that there exists a $\lambda=\lceil \log (a\log m) \rceil \sim O(\log\log m)$, such that $\frac{f(\lambda)}{2^{\lambda}}\leq 1$. Hence, for any $t\geq0$, it suffices to show that
	\begin{align}
	\E\left[\Phi(\mathbf{r}_{t+\lambda})\right] \leq 2m,
	\end{align}
	where $\lambda=O(\log\log m)$. By Markov's inequality, we have
	\begin{align}
	\Pr(\Phi(\mathbf{r}_{t+\lambda}) \geq 4m) &\leq \frac{\E\left[\Phi(\mathbf{r}_{t+\lambda})\right]}{4m} \leq \frac{1}{2}.
	\end{align}
	
	Let $\tau$ be the first time such that $\Phi(\mathbf{r}_{t+\tau}) \leq 4m$ occurs from any step $t$. Flip a coin after each of $\lambda$ steps until $\Phi(\mathbf{r}_{t+\tau}) \leq 4m$:
	\bgroup
	\arraycolsep=1.4pt
	\begin{equation}
	\Pr(\tau = i\lambda) \left\{
	\begin{array}{l l l}
	\geq p            \ \ \ & \mbox{if $i = 1$,}\\
	\leq p(1-p)^{i-1} \ \ \ & \mbox{if $i \geq 2$,}
	\end{array} \right.
	\end{equation}
	\egroup
	where $p=\Pr(\Phi(\mathbf{r}_{t+\lambda}) \leq 4m)$, thus $0.5 \leq p < 1$. Therefore,
	\begin{align}\nonumber
	\E[\tau] &= \sum_{i=1}^{\infty}i\lambda \Pr(\tau=i\lambda) \nonumber \\ 
	&\leq \lambda + \frac{1}{4}\lambda\sum_{i=2}^{\infty}(1-p)^{i-2} \nonumber \\
	&\leq \left(1+\frac{1}{4p}\right)\lambda \leq \frac{3}{2}\lambda.
	\end{align}
	
	By Markov's inequality, $\Pr(\tau \geq 60\lambda) \leq \frac{\E\left[\tau\right]}{60\lambda} = 0.025$. Conversely, $\Pr(\Phi(\mathbf{r}_{t+60\lambda}) \leq 4m ) \geq 0.975$. As a result, from any state, the system can rapidly (within $O(\log\log m)$ steps) converge to a $\epsilon$-Nash equilibrium, where the load ratios of the nodes are approximately balanced.
	
\end{proof}

\subsection{Incentive}
Note that non-cooperative or malicious behaviors in may significantly adversely affect the entire network via flooding the relay network with invalid messages. To this end, relayers will verify every message they received before propagating it to other relayers or nodes. This also enables relayers to identify malicious nodes, and then limit the effect of such attacks. Besides, It incurs bandwidth costs to deliver large traffic volumes to a large number of nodes. 

To ensure \fissions sustainability and robustness, relayers are incentivized to relay messages as quickly as possible, helping partition or block committees to win the race for block rewards (deterministically, the first $\theta$ votes for blocks). One possible approach is to apply a subscription-based model that every node will pay a rental fee to its relayer. Another approach is to open a payment channel between every node and its relayer such that each payment on a single message is paid in an on-demand basis. We will detail the incentive mechanism in our future work and implementations.

\section{The P2P Network}\label{sec:p2p}
Unlike other blockchains' P2P network, where messages are delivered in a gossip-like manner, \fissions P2P network takes advantage of P2P content-sharing capabilities of BitTorrent or IPFS. Combined with the relay network and the P2P network, \fission provides an efficient, cost-effective, and highly available distributed information storage and dissemination solution that scales to high transactions volumes.

Specifically, only hashes of messages are gossiped in \fissions P2P network. Hence \fission scales to large blocks. However, it may require every node to retrieve the whole content from either the relayer, or a node in the P2P network immediately after it receives a hash from its neighbor or relayer. For instance, a block committee member needs to retrieve the to-be-verified block and vote for it within a \textit{latency constraint} (\eg $\Delta_{\texttt{interim}}$ or $\Delta_{\texttt{main}}$) in order to receive the block rewards. To incentivize nodes to contribute their resources (\eg storage, bandwidth, \etc) to serve others, a small amount of fee (in \textsf{FIT}) will be charged by the \textit{content provider}, which are the nodes having the required data in their local storage. It should be noted that nodes should pay less fees to the content providers than the relayers (usually with much higher network capacities and hardware specifications) which provide a better QoS (\ie Quality-of-Service) in terms of latency.  

Therefore, the challenge is to minimize nodes' costs of retrieving content while respecting 1) the latency constraints of content, 2) the heterogeneous resource (\eg bandwidth) distribution on nodes, and 3) the system churn (nodes can join and leave the system anytime). To address this challenge, we propose an online algorithm that minimize the cost by optimally utilizing the bandwidth of nodes  in a fully decentralized manner. 

\subsection{Problem Formulation}
Let $V$ and $W$ denote the set of online nodes and the set of data (\eg transactions, blocks, \etc) that is required by these nodes, respectively. Let $P_k\in V$ denote the set of content providers who are online nodes that has data $k$ in their storage. The size of each data $i\in W$, denoted by $w_i$, is measured in units of kilobytes (\ie, KB). It should be noted that all the data is splitted into \textit{chunks}, in order to improve the data availability and efficiency of retrieval. Let $Q_i=\cup_{j\in V}\{q_i^j\}$ be the set of requests sent by all nodes for data $i$, where $q_i^j$ denotes the request sent by node $j$ for content $i$. The \emph{weight} of the request $q_i^j$ be the size of the data (\ie, $w_i$). The latency constraint for each data is denoted by $d=\min (\Delta_{\texttt{eager}}, \Delta_{\texttt{lazy}})$, measured in seconds. Without loss of generality, we assume online nodes are heterogeneous and have different upload capacities. Let $u_i$ denote the advertised upload capacity of node $i$, measured in kilobytes per second, \ie, KB/s. 

Since the amount of data that can be delivered by node $i$ within the latency constraint is $d\cdot u_i$, and thus the relayers will disseminate the rest of the amount of data, if any. Therefore, the relayers will send $\max\{l_i - d\cdot u_i, 0\}$ of data due to the inadequate upload capacity of node $i$. \textit{Our goal is to maximize the data sent by nodes while respecting the upload capacities of heterogeneous nodes and the latency constraints of data}, which is formulated as follows:
\begin{align}
\text{maximize:}   &\ \ \sum_{j\in V}\sum_{k\in W}w_kx_k^{j,i} \label{hdn:prob:max-p2p} \\
\mbox{s.t.:} &\ \ \sum_{i\in Q_k}\sum_{k\in W}w_kx_k^{j,i} \leq d\cdot u_i, \ \forall q_k^j\in Q_k, \\
&\ \ x_k^{j,i} \in \{0,1\}, \ \forall q_k^j\in Q_k.
\end{align}
where $x_k^{j,i}$ is an indicator variable indicating whether the request $q_k^j$ is sent to node $i$. Specifically, if $q_k^j$ is sent to node $i$, $x_k^{j,i}=1$; otherwise $x_k^{j,i}=0$.  It can be observed that the objective (\ref{hdn:prob:max-p2p}) is a multidimensional knapsack problem~\cite{puchinger2010multidimensional}, which is \textsf{NP}-complete~\cite{garey1979computers} in general. Many approaches have been proposed to solve it, such as LP-relaxation~\cite{schrijver1998theory}, the primal-dual method~\cite{buchbinder2009design} and dynamic programming~\cite{schrijver2003combinatorial}. However, these centralized approaches are not desirable and practical since a global connectivity information among nodes and the request information is required to obtain a feasible solution. Moreover, there are existing decentralized~\cite{panconesi2008fast} and online~\cite{buchbinder2009design,emek2010online} algorithms that compute approximate solutions within poly-logarithmic communication rounds, but they are based on the assumption that each node has the same capacity and is aware of global information, \ie, it knows about loads and upload capacities of nodes, and the connectivity information between any two of them. Note that the dual problem is to minimize the data sent by relayers, given the same request set, which can be formally stated as follows:
\begin{align}
\text{minimize:} &\ \ \sum_{i\in V} \max\{ \sum_{Q_k\neq \emptyset}\sum_{i\in Q_k}x_k^{j,i}w_k - d\cdot u_i, 0\}        \label{eq:hdn:p2p:obj}  \\ 
\mbox{s.t.:} &\ \ \sum_{i\in Q_k} x_k^{j,i} = 1, \ \forall r_k^j\in Q_k  \\
&\ \ x_k^{j,i} \in \{0,1\}, \ \forall q_k^j\in Q_k. 
\end{align}
Considering every request has only one provider. The load of node $i$ is $l_i=\sum_{Q_k\neq \emptyset}\sum_{i\in Q_k}x_k^{j,i}w_k$. Therefore, the objective (\ref{eq:hdn:p2p:obj})  can be simplified to
\begin{align}
\text{minimize:} \ \ \sum_{i\in V}\max\{l_i - d\cdot u_i, 0\},  \label{eq:hdn:p2p:obj2}
\end{align}
where $\max\{l_i - d\cdot u_i, 0\}$ is the amount of bandwidth consumed by the relayers due to the inadequate upload capacity of node $i$. Therefore, we transform the problem of maximizing the data sent by nodes into an asymmetric load balancing problem: how to distribute the loads on \emph{overloaded} nodes ($l_i > d\cdot u_i$) to \emph{underloaded} nodes ($l_i < d\cdot u_i$).

The objective of \fissions P2P storage layer is to design a P2P data retrieval strategy that can be implemented in a decentralized and online fashion. To this end, we model the P2P data retrieval problem as a congestion game and thus inherit its practical and decentralized nature, which has been successfully used to model load balancing problems in P2P networks as well as many other real world applications due to its conceptual simplicity.

\subsection{Data Retrieval Strategy (DRS)}
The objective (\ref{eq:hdn:p2p:obj2}) can be formulated as an asymmetric weighted congestion game, de=noted by $G$, which is defined as
\begin{align}\label{game}
G=(Q,P,(A_i)_{i\in Q},(f_e)_{e\in P}),
\end{align}
where $Q=\cup_{k\in W}Q_k$ and $P=\cup_{k\in P}P_k$ correspond to the set of requests and the set of content providers, respectively. Each request $i\in Q$ has a weight $w_i$ (\ie, the size of the requested data), and each node $i\in P$ has an upload capacity $u_i$. The strategy set of request $i$, denoted by $A_i$, is the set of nodes $P_i$ that have stored the requested data in their caches. Since every request is sent to only one node at a time, $G$ is a \emph{singleton} congestion game, \emph{i.e.}, $\forall i\in Q$, $a_i\in A_i$.

The strategy profile of $G$, denoted by $\mathbf{a} = (a_i)_{i\in Q}$, corresponds to the DRS\footnote{That is, choose which node to retrieve data from.}, and $\mathcal{A} = \times_{i\in Q}A_i$ corresponds to the set of all possible DRSs. The cost of request $i$ given a node selection profile $\mathbf{a}$ is defined as
\begin{align}\label{cost}
c_i(\mathbf{a}) = \min\{\max\{h_i - d\cdot u_j, 0\}, w_i\},
\end{align}
where $j=a_i$ is the node selected by request $i$ and $h_i$ is the \emph{height} \footnote{All nodes serve retrieval requests in a FIFO manner.} of request $i$ at node $j$. Clearly, for each node $j\in S$, the sum of the cost of requests that select node $j$ is
\begin{align}
\sum_{a_i=j}c_i(\mathbf{a}) &= \sum_{a_i=j} \min\{\max\{h_i - d\cdot u_j, 0\}, w_i\} \\
&=\max\{ l_j - d\cdot u_j, 0 \}, \ \ \forall a_i\in \mathbf{a}.
\end{align}
Recall that $\max\{l_j - d\cdot u_j, 0\}$ is the amount of data that the relayers will disseminate to the corresponding nodes due to the upload capacity limitation of node $j$. We define the potential function as the sum of the cost of all requests:
\begin{align}\label{potential}
\Phi(\mathbf{a}) &= \sum_{j\in P}\sum_{a_i=j}c_i(\mathbf{a}) = \sum_{j\in P}\max\{l_j - d\cdot u_j, 0\}.
\end{align}
Therefore, the problem of maximizing the data sent by nodes can be solved by minimizing the potential function of the corresponding congestion game. 

It has been shown that finding a pure Nash equilibrium in a congestion game is \textsf{PLS}-complete~\cite{fabrikant2004complexity}, hence the number of changes of the strategy profile required from one state to any pure Nash equilibrium is exponentially large. Therefore, our goal is to quickly reach an \emph{approximate} Nash equilibrium from any state by enabling each node to change its strategies for its requests according to a \emph{bounded jump rule} (see lines~\ref{alg:cond1} and~\ref{alg:cond2} in Algorithm \ref{alg:drs}), which will be discussed later.

It is worth to point out that the offline data retrieval strategies are undesirable in practice due to both the additional delay they incur to find appropriate content providers for all nodes, and the inefficient utilization of node bandwidth (since the node bandwidth cannot be used to delivery data during the process of finding a good node selection strategy). Furthermore, blockchains are highly dynamic environments such that churn and time-varying node capacities will significantly degrade the feasibility and efficiency of offline data retrieval strategies.

We introduce an online, light-weight data retrieval strategy which maximizes the data sent by nodes by enabling each node to repeatedly \emph{update} its strategy for each data that it requests. More specifically, all requests are sent by nodes in a repeated fashion, \ie, the strategy of each request can be changed if (1) the current content provider of the request is overloaded, and (2) the corresponding node finds an underloaded provider within the latency constraint.
\begin{algorithm}[ht!]
	\caption{Data Retrieval Strategy.} \label{alg:drs} Let $t$ be the local time of node $i\in N$\; Let $t_k$ be the time when $q_k^i$ is generated\; 
	Let $p_i^k$ be the provider of the request $q_k^i$\;
	\ForEach{$q_k^i\in Q_k$}{
		Contact a node $j$ from $P_k(d)$ u.a.r.\label{alg:fc} \;
		$p_i^k \leftarrow j$\;
		Send $q_k^i$ to node $p_i^k$\label{alg:drs:first}\;
		Let $h_k^i$ be height of $q_k^i$ at node $j$\;
		\If{\label{alg:nochange1}$h_k^i \leq d\cdot u_j$}{
			Stop contacting other nodes\;
		}
		\While{\label{alg:cond1}$t \leq t_k + d$ and  $h_k^i - (d - t + t_k)u_{p_i^k} \geq w_k $}{
			Contact a node $j^{\prime}$ from $P_k(d-t+t_k)$ u.a.r. \label{alg:rptreq} \;
			\If{timeout}{Go to line \ref{alg:cond1}\;}
			Let $l_{j^{\prime}}$ be the load of node $j^{\prime}$\;
			\If{\label{alg:cond2}$l_{j^{\prime}} \leq (d - t + t_k)u_{j^{\prime}}$}{
				$p_i^k \leftarrow j^{\prime}$\;
				Send $q_k^i$ to node $p_i^k$\;
			}\Else{Go to line \ref{alg:cond1}\;}
			Let $h_k^i$ be height of $q_k^i$ at node $j^{\prime}$\;
			\If{\label{alg:nochange2}$h_k^i < w_k + (d - t + t_k)\cdot u_j^{\prime}$}{
				Stop contacting other nodes\label{alg:stop2}\;
			}\label{alg:cond1-end}
		}
		Send the request for data $k$ directly to the relayer\;
	}
\end{algorithm}

Our algorithm is online such that each node performs data retrieval independently. Each request is sent to a content provider u.a.r. once it is generated (line \ref{alg:drs:first} in Algorithm~\ref{alg:drs}). Each node $i$ can change its content provider for each request in a repeated fashion if the bounded jump rule is satisfied. 

Our algorithm is light-weight such that each node $i$ needs to contact only one content provider in $P_k(d^{\prime})$ for each request $q_k^i$ with latency constraint $d^{\prime}$ at one time, where $P_k(d^{\prime}) = \{i\in P_k: l_i < d^{\prime}u_i \}$. The current latency constraint (\emph{i.e.}, $d^{\prime} = d - t + t_k$) of the request is calculated at the requesting node, where $t$ and $t_k$ are the current time and the time when the request is generated, respectively. It is worth noting that every node in $P_k$ has the same probability being selected, although some nodes have failed to satisfy the bounded jump rule in previous rounds. 


The key idea of the bounded jump rule is to restrict the change of content providers such that the potential of the system decreases monotonically. Since all requests are processed in a FIFO fashion, the nodes that have the requests which can be processed within their latency constraints will not consider changing their current strategies (see lines \ref{alg:nochange1} and \ref{alg:nochange2} in Algorithm \ref{alg:drs}). From the perspective of congestion games, those requesting nodes have no incentives to change their  strategies since their costs cannot be reduced (for pure Nash equilibrium) or not significantly reduced (for $\epsilon$-Nash equilibrium). More formally, for a given request $q_k^i$ (sent by node $i$ for data $k$) at node $p_i^k$ with a weight $w_k$ and a height $h_k^i(t)$, by Eq.~\ref{cost}, the cost of $q_k^i$ at time $t$ is
\begin{align}
c_{q_k^i}(\mathbf{a}(t)) = \min\{ \max\{ h_k^i(t) - (d-t+t_k)u_{p_i^k}, 0 \}, w_k \},
\end{align}
where $\mathbf{a}(t)$ is the strategy profile of all nodes at $t$, and $(d-t+t_k)$ is the latency constraint of data
$k$ at $t$. This implies
\begin{align}
c_{q_k^i}(\mathbf{a}(t)) = w_k \iff h_k^i(t) - (d-t+t_k)u_{p_i^k} \geq w_k. \label{alg:eq-bjr}
\end{align}

Therefore, a node will consider to change its strategy for its requests only when the costs of requests are higher than their weights, otherwise it will stop contacting other nodes (line \ref{alg:stop2} in Algorithm~\ref{alg:drs}). If the requester of $q_k^i$ (\emph{i.e.}, node $i$) finds another content provider $j^{\prime}$ at $t$ (line~\ref{alg:rptreq} in Algorithm~\ref{alg:drs}) during the repeated update process (the loop from line~\ref{alg:cond1} to line \ref{alg:stop2} in Algorithm~\ref{alg:drs}) such that $l_{j^{\prime}}(t) \leq (d - t + t_k)u_{j^{\prime}}$, it will send $q_k^i$ to node $j^{\prime}$. It is worth noting that multiple nodes may find the content provider $j^{\prime}$ and send requests to it at the same time. In the worst case, only the first request can be processed by node $j^{\prime}$ since nodes process requests in a FIFO manner. Finally, the request will be sent to the relayers if $c_{q_k^i}(\mathbf{a}(t))= w_k$ for any $t\in[t_k, t_k+d]$. It should be noted that no requesting node will change its content provider mid-way during a download, since it will stop contacting other content providers if it can download the request data from the current content provider (lines~\ref{alg:nochange1} and~\ref{alg:nochange2} in Algorithm~\ref{alg:drs}).

With the bounded jump rule, it is easy to observe that the potential function of game $G$ is monotonically decreasing such that for a given set of requests and $\forall t_1,t_2$, we have \begin{align}
t_1 \leq t_2 \implies \Phi(\mathbf{a}(t_1)) \geq \Phi(\mathbf{a}(t_2)).
\end{align}
Furthermore, our algorithm applies to a fully distributed and concurrent setting such that all nodes can change their content providers at the same time without a centralized coordination.

\subsection{Analysis}
With the proposed bounded jump rule, a steady state is a $\epsilon$-Nash equilibrium of the congestion game. Therefore, a critical question to investigate is \emph{how fast does the system reach a $\epsilon$-Nash equilibrium from any state?} In what follows, we seek to investigate the upper bound of the convergence time of our data retrieval strategy.

Recall that the potential function of the game $G$ with the bounded jump rule is monotonically decreasing in the presence of concurrent changes of nodes' strategies (nodes update their content providers individually), because nodes only consider to change the overloaded content providers. More specifically, node $i$ will change the content provider of request $q_k^i$,  if and only if $h_k^{i,j} - (d-t+t_k)u_j \geq w_k$ (\ie Eq.~\ref{alg:eq-bjr} and Line~\ref{alg:cond1} in Algorithm~\ref{alg:drs}), where $t$ and $t_k$ are the local time and the time when the request is generated, respectively. The equivalent condition is $h_k^{i,j} + (t-t_k)u_j - w_k \geq d\cdot u_j$. Therefore, we can analyze the convergence rate of the proposed node selection strategy via analyzing the convergence rate of the corresponding congestion game.

\begin{theorem}\label{thm:convergence}
	Given a game $G=(Q,P,(A_i)_{i\in Q},(f_e)_{e\in P})$ that satisfies the bounded jump rule, where $Q=\cup_{k\in W}Q_k$ and $P=\cup_{k\in W}P_k$ is the set of requests and content providers, respectively. Then a $\epsilon$-Nash equilibrium can be reached within $O(\log n)$ rounds, where $n=|V|$ is the number of online nodes.
\end{theorem}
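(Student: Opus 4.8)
The plan is to reuse the potential-function machinery behind Theorem~\ref{theorem:plb:convergence}, but with a one-round contraction that is only \emph{constant-factor} multiplicative rather than the square-root contraction available there; this is exactly what replaces a double-logarithmic bound by a single-logarithmic one. I work throughout with the potential $\Phi(\mathbf{a})=\sum_{j\in P}\max\{l_j-d\cdot u_j,0\}$ of Eq.~\ref{potential}, writing $\Phi^\ast=\min_{\mathbf{a}\in\mathcal{A}}\Phi(\mathbf{a})$, and I rely on two facts already established: $\Phi$ is monotonically non-increasing along any concurrent run of Algorithm~\ref{alg:drs} under the bounded jump rule, and a request moves only when its current cost equals its full weight $w_k$ (Eq.~\ref{alg:eq-bjr}), so its source provider carries overflow at least $w_k$, the move lowers the source's contribution to $\Phi$ by exactly $w_k$, and it raises the target's contribution by strictly less than $w_k$.

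First I would prove the key one-round estimate: a constant fraction of the current \emph{excess potential} $\Phi(\mathbf{a}(t))-\Phi^\ast$ is removed in expectation, i.e.\ $\E\bigl[\Phi(\mathbf{a}(t+1))\mid\mathbf{a}(t)\bigr]-\Phi^\ast\le\beta\bigl(\Phi(\mathbf{a}(t))-\Phi^\ast\bigr)$ for a universal constant $\beta\in(0,1)$. Both the total overflow $\sum_j\max\{l_j-d u_j,0\}$ and the total slack $\sum_{j:\,l_j<d u_j}(d u_j-l_j)$ are at least $\Phi(\mathbf{a}(t))-\Phi^\ast$, so there is in principle enough mobile mass in the providers' overflow bands to fill the available slack; the work is to show that the randomized, capacity-proportional contacts actually route a constant fraction of it. To that end: each request in an overflow band contacts a provider drawn uniformly at random from the identifier space (hence with probability proportional to capacity) and, by line~\ref{alg:cond2} of Algorithm~\ref{alg:drs} together with the bounded jump rule, is accepted precisely when it lands on a still-underloaded provider; since providers serve FIFO, each underloaded provider absorbs in a round at most its own slack worth of migrations. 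A balls-into-bins/charging argument then lower-bounds, by a constant, the expected total absorbed mass in terms of the slack, and by the first paragraph each absorbed unit of weight is a unit of decrease in $\Phi$.

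The remainder follows the template of Theorem~\ref{theorem:plb:convergence}. Iterating the one-round estimate gives $\E[\Phi(\mathbf{a}(t+\lambda))]-\Phi^\ast\le\beta^{\lambda}\,(\Phi_{\max}-\Phi^\ast)$, and the crude bound $\Phi_{\max}\le\sum_{k\in W}w_k\le|Q|\,w_{\max}$ is polynomial in $n=|V|$ (each online node carries $O(1)$ outstanding requests and data sizes are bounded), so $\log\Phi_{\max}=O(\log n)$. Choosing $\lambda=\Theta(\log n)$ drives $\E[\Phi(\mathbf{a}(t+\lambda))]$ below the $\epsilon$-Nash threshold, which as recorded before Theorem~\ref{theorem:plb:convergence} is of order $m=|P|$. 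To upgrade this expectation bound to a high-probability guarantee I would reuse that theorem's device verbatim: Markov's inequality gives constant success probability for a block of $\lambda$ rounds, the index of the first successful block is dominated by a geometric variable of expectation $O(1)$, a further Markov step bounds the waiting time by $O(\lambda)=O(\log n)$ with high probability, and because $\Phi$ never increases the system stays in the $\epsilon$-Nash equilibrium once it enters.

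The main obstacle is the one-round contraction under full concurrency. In a synchronous round many requests may target the same underloaded provider while only the first is served, so the naive product of per-request success probabilities overcounts the progress; the charging argument above is the remedy, and its soundness rests exactly on the two structural facts from the first paragraph — the bounded jump rule caps how much an accepted migration inflates the target's contribution (strictly below $w_k$) and guarantees the source already holds overflow at least $w_k$, so each accepted migration nets at least a constant fraction of the moved weight. A secondary nuisance is that $\Phi$ uses the fixed deadline $d$ while Algorithm~\ref{alg:drs} tests against the shrinking residual deadline $d-t+t_k$; since the residual tests are strictly more conservative, every migration the algorithm actually performs is $\Phi$-decreasing, which is all the argument requires.
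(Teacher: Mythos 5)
There is a genuine gap, and it sits exactly where you flag ``the main obstacle'': your key lemma --- that one synchronous round removes a constant fraction of the \emph{excess potential} $\Phi(\mathbf{a}(t))-\Phi^\ast$ --- is not only left as an unspecified balls-into-bins/charging argument, it is false in general for this game. Two failure modes: (i) the game is \emph{asymmetric} (request $q_k^i$ may only migrate to providers in $P_k$ that actually store data $k$), so your global accounting ``total slack $\geq \Phi(t)-\Phi^\ast$'' does not imply that any of that slack is reachable by the overflowing requests; and (ii) the dynamics can stall at a local ($\epsilon$-Nash) state in which every reachable provider is overloaded yet $\Phi$ is still far above $\Phi^\ast$ (the potential at a Nash equilibrium of a congestion game need not be the global minimum). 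In such a state $\Phi(t+1)=\Phi(t)>\Phi^\ast+\beta\bigl(\Phi(t)-\Phi^\ast\bigr)$ for every $\beta<1$, so no universal contraction constant exists. Your target quantity is simply the wrong one: convergence here is to an equilibrium, not to the optimum, so the distance to $\Phi^\ast$ is not what must shrink.

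The paper's proof is built precisely to dodge this. It tracks the product $\Omega(t)=m_t\Phi(t)$, where $m_t=|P(t)|$ is the number of \emph{underloaded, reachable} content providers, and shows (via Eq.~\ref{eqn:psi-3}, Cauchy--Schwarz and Jensen) that $\E[\Omega(t+1)]\leq\frac{1}{4}\Omega(t)$: in a round, either the potential drops by a factor $(1-m/m_t)$ or the pool of underloaded providers shrinks from $m_t$ to $m$, and the product of the two always contracts since $m(1-m/m_t)\leq m_t/4$. Termination is declared when $\Omega=O(1)$ --- i.e.\ either $\Phi$ is negligible or no underloaded providers remain --- which is the correct $\epsilon$-Nash criterion for this game (your threshold ``$\Phi=O(m)$'' is borrowed from the relay potential of Theorem~\ref{theorem:plb:convergence} and does not transfer). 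The contraction is then converted to an $O(\log(|W|n))$ bound on the expected hitting time via Lemma~\ref{lem}, and only the final high-probability wrap-up (Markov plus monotonicity of $\Phi$ under the bounded jump rule, which you state correctly) matches your outline. To repair your proposal you would have to replace the excess-potential contraction with a two-parameter argument of this kind; as written, the central step cannot be completed.
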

\begin{proof}
	
	Let $\Phi(t)$ be the potential of the game at time $t$. According to Eq.~\ref{potential}, we have $\Phi(t)=\sum_{k\in Q(t)}w_k$. Similarly, let $P(t)=\cup_{k\in Q(t)}P_k(d)$ be the set of underloaded nodes at time $t$, where $P_k(d)=\{i\in S_k: l_i < d\cdot u_i\}$ denotes the set of underloaded nodes that have data $k$ in their caches. Let $n=|V|$ and $m_t=|P(t)|$ be the number of online nodes and underloaded content providers at time $t$, respectively. Note that a pure equilibrium is reached at time $t$ if (1) $\Phi(t)=0$ (all requests can be delivered by nodes within latency constraints), or (2) $m_t=0$ (no more underloaded content providers). Therefore, we introduce a non-increasing function $\Omega(t) = m_t\Phi(t)$ to analyze the convergence rate of DRS.
	
	At time $t+1$, let $\mathbf{x}=\{x_1,x_2,...,x_{m_t}\}$ be a vector, where $x_i$ denote the total size of requests that migrate to content provider $i\in P(t)$. Obviously, $\Phi(t)=\sum_{i=1}^{m_t}x_i$. By applying Cauchy-Schwarz inequality to $\E[\Omega(t+1)]$, which is the expected value of $\Omega(t+1)$, we have the following:
	\begin{align}
	\E[\Omega(t+1)] &= \E[m_{t+1}\Phi(t+1)] \nonumber\\
	&\leq \left(\E[m_{t+1}^2]\E[\Phi^2(t+1)]\right)^{\frac{1}{2}}. \label{eqn:psi-1}
	\end{align}
	Noting that the square root function $f:x\rightarrow \sqrt{x}$ and the function $g:x\rightarrow x^2$ are convex functions, we apply Jensen's inequality to Eq.~\ref{eqn:psi-1} and have the following:
	\begin{align}
	&\E[\Omega(t+1)] \leq \E[m_{t+1}]\E[\Phi(t+1)] \nonumber \\
	&=\sum_{m=1}^{m_t}m\Pr(m_{t+1}=m)\E[\Phi(t+1)|m_{t+1}=m], \label{eqn:psi-2}
	\end{align}
	where $m$ is the number of underloaded content providers in $P(t+1)$ at time $t+1$. The expected value of the potential function given $m$ is
	\begin{align}
	\E[\Phi(t+1)|m_{t+1} = m] &= \Phi(t) - \frac{\dbinom{m-1}{m_t-1}}{\dbinom{m}{m_t}}\sum_{i=1}^{m_t}x_i \nonumber \\
	&= \left( 1 - \frac{m}{m_t}\right)\Phi(t). \label{eqn:psi-3}
	\end{align}
	Combining Eq.~\ref{eqn:psi-3} and Eq.~\ref{eqn:psi-2}, we have the following:
	\begin{align}
	\E[\Omega(t+1)] &\leq \sum_{m=1}^{m_t}m\left( 1 - \frac{m}{m_t}\right)\Phi(t)\Pr(m_{t+1} = m) \\
	&\leq \frac{m_t}{4}\Phi(t)\sum_{m=1}^{m_t}\Pr(m_{t+1} = m) \leq \frac{1}{4}\Omega(t). \label{eqn:psi-4}
	\end{align}
	
	Let $\tau$ be the number of rounds that a $\epsilon$-Nash equilibrium (\emph{i.e.}, $\Omega(t+\tau)=O(1)$) is reached from time $t$. By Lemma~\ref{lem}, we have the following
	\begin{align}
	\E[\tau| \Omega(t)] \leq \log\left( \frac{\Omega(t)}{\Omega(t+\tau)} \right) = \log\left( \Psi(t) \right).
	\end{align}
	Note that the maximum value of of $\Omega(t)$ is $|W|n^2$ ($|W|$ is the total number of data), which occurs in the case when every node requests all the data in the region. Therefore, $\E[\tau]\leq 2\log(|W|n)$. By Markov's inequality, we have that
	\begin{align}
	\Pr(\tau\geq 40\log(|W|n))\leq \frac{\E[\tau]}{40\log(|W|n)} =0.05.
	\end{align}
	This implies that $\Pr(\tau\leq 40\log(|W|n)) \geq 0.95$. As a result, with a high probability, a $\epsilon$-Nash equilibrium can be reached from any state within $O(\log n)$ rounds.
\end{proof}

\begin{lemma}\label{lem}
	Let $X_1,X_2,...$ denote a sequence of non-negative random variables and assume that for all $i \geq 0$
	\begin{align}
	\E[X_i|X_{i-1} = x_{i-1}] \leq \alpha\cdot x_{i-1}
	\end{align}
	for some constant $\alpha\in (0,1)$. Furthermore, fix some constant $x^{*}\in (0,x_0]$ and let $\tau$ be the random variable that describes the smallest $t$ such that $X_t \leq x^{*}$. Then,
	\begin{align}
	\E[\tau|X_0=x_0] \leq \frac{2}{\log(1/\alpha)}\cdot \log\left(\frac{x_0}{x^{*}}\right).
	\end{align}
\end{lemma}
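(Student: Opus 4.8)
The plan is to treat this as a routine geometric-drift lemma. The one-step contraction $\E[X_i \mid X_{i-1}=x_{i-1}] \le \alpha x_{i-1}$ pins down the decay of $\E[X_i]$, and because the target level $x^{*}>0$ is fixed, a Markov-inequality tail bound on $X_t$ directly controls $\Pr(\tau > t)$; summing that tail over $t$ then gives $\E[\tau]$, since $\tau$ is a non-negative integer-valued random variable.

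First I would iterate the hypothesis. Using the tower property and that $X_0 = x_0$ is fixed, $\E[X_t \mid X_0 = x_0] = \E[\E[X_t \mid X_{t-1}] \mid X_0 = x_0] \le \alpha\,\E[X_{t-1}\mid X_0=x_0] \le \cdots \le \alpha^t x_0$ for all $t \ge 0$. Next, observe that $\{\tau > t\} \subseteq \{X_t > x^{*}\}$: on $\{\tau>t\}$ every one of $X_1,\dots,X_t$ exceeds $x^{*}$ (and $X_0=x_0>x^{*}$ whenever $x^{*}<x_0$, while if $x^{*}=x_0$ then $\tau=0$ and the bound is trivial). Hence Markov's inequality gives $\Pr(\tau>t) \le \Pr(X_t > x^{*}) \le \E[X_t]/x^{*} \le \alpha^t x_0/x^{*}$. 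Finally I would write $\E[\tau \mid X_0 = x_0] = \sum_{t\ge 0}\Pr(\tau>t) \le \sum_{t\ge 0}\min(1,\ \alpha^t x_0/x^{*})$ and split the sum at the crossover index $t^{*} = \log(x_0/x^{*})/\log(1/\alpha)$: the terms with $t \le t^{*}$ contribute at most $t^{*}+1 = \log(x_0/x^{*})/\log(1/\alpha) + 1$, and the terms with $t > t^{*}$ form a convergent geometric series in $\alpha$ that sums to $O(1/(1-\alpha))$ because $\alpha^{t^{*}}x_0/x^{*}=1$.

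The only real work is the bookkeeping of that second (tail) sum and pinning down the constant: the clean bound $\frac{2}{\log(1/\alpha)}\log(x_0/x^{*})$ is obtained by absorbing the additive $1 + O(1/(1-\alpha))$ slack into a second copy of $t^{*} = \log(x_0/x^{*})/\log(1/\alpha)$, which is legitimate in the regime where the lemma is applied — in Theorem~\ref{thm:convergence} one has $\alpha = 1/4$ bounded away from $1$ and $x_0/x^{*} = \Omega(t) = |W|n^2$ large, so the $1/(1-\alpha)$ term is genuinely lower-order. If one prefers a fully mechanical summation, an alternative is a block-restart argument: group steps into blocks of length $\ell = \lceil 1/\log(1/\alpha)\rceil$ and, using the Markov property with $\E[X_{k\ell}\mid\mathcal{F}_{(k-1)\ell}] \le \alpha^{\ell} X_{(k-1)\ell}$ together with the fact that $X_{(k-1)\ell}>x^{*}$ on $\{\tau>(k-1)\ell\}$, prove by induction that $\Pr(\tau>k\ell)\le \alpha^{k\ell}x_0/x^{*}$; summing this geometric-in-$k$ tail yields the same $O(\log(x_0/x^{*})/\log(1/\alpha))$ bound with an explicit constant. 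Either way no step is genuinely hard; finiteness of $\tau$ (and of $\E[\tau]$) falls out for free from summability of the geometric tail.
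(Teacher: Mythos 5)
The paper does not actually prove this lemma --- it defers entirely to the citation of Fischer et al. --- so your self-contained argument is already more than the paper offers, and it is the standard (and essentially correct) way to prove a statement of this type: iterate the drift condition to get $\E[X_t\mid X_0=x_0]\le \alpha^t x_0$, use $\{\tau>t\}\subseteq\{X_t>x^{*}\}$ plus Markov's inequality to get $\Pr(\tau>t)\le \alpha^t x_0/x^{*}$, and sum the tail. One small technical point: the hypothesis as stated only controls $\E[X_i\mid X_{i-1}]$, whereas the tower-property iteration needs $\E[X_i\mid X_{i-1},X_0]\le\alpha X_{i-1}$ (or a drift condition with respect to the full filtration); this is harmless here because the process in the application is Markov, but it is worth saying explicitly, as you do in your block-restart variant.

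The one place where your argument does not literally deliver the stated inequality is the constant, and you are right to flag it: the tail sum gives $\E[\tau\mid X_0=x_0]\le t^{*}+1+1/(1-\alpha)$ with $t^{*}=\log(x_0/x^{*})/\log(1/\alpha)$, and absorbing the additive slack into a second copy of $t^{*}$ requires $t^{*}\ge 1+1/(1-\alpha)$. This is not a defect of your proof but of the lemma as stated: if $x^{*}<x_0$ then $\tau\ge 1$ almost surely, while the right-hand side $\tfrac{2}{\log(1/\alpha)}\log(x_0/x^{*})$ tends to $0$ as $x^{*}\uparrow x_0$, so the claimed bound cannot hold universally and must implicitly assume $x_0/x^{*}$ (or $t^{*}$) is not too small. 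In the regime where the lemma is invoked in Theorem~\ref{thm:convergence} ($\alpha=1/4$ fixed, $x_0/x^{*}=|W|n^2$ growing), your bound $t^{*}+O(1)$ is at least as strong as the quoted $2t^{*}$, so the downstream $O(\log n)$ conclusion is unaffected. A clean fix is to state the lemma with the bound $\tfrac{1}{\log(1/\alpha)}\log(x_0/x^{*})+1+\tfrac{1}{1-\alpha}$, which your argument proves outright.
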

\begin{proof}
	The complete proof is shown in~\cite{fischer2008approximating}.
\end{proof}

Theorem \ref{thm:convergence} indicates that the proposed DRS can rapidly reach a $\epsilon$-Nash equilibrium within $O(\log n)$ communication rounds from any state, where $n=|V|$ is the total number of online nodes in the system.

\section{Future Work}
At the time of writing, \fission is still a Proof of Concept, and it has limitations that we want to address in future work. As described in Section~\ref{sec:p2p}, nodes are contributing their resources (\eg storage and bandwidth) to speed up information propagation and improve data availability. We leave to future work the exploration of incentive mechanisms that encourages nodes to share their resources, similar to Filecoin\footnote{https://filecoin.io/}. We also leave to future work the use of advanced cryptography, such as BLS signature scheme~\cite{boneh2001short} for performance improvements on \ppap consensus. Last but not least, we see in Section~\ref{sec:ppap:security} that the system activity (\ie online users participating the consensus process) has a great impact to security guarantees. Considering that most of users may not be online 24x7, we will investigate some delegation mechanisms that improve the security of our consensus protocol without significantly degrading the decentralization nature of \fission.

\section{Conclusion}\label{sec:conclusion}
In this paper, we present \fission that distinguishes itself from all existing permissionless blockchains that it achieves scalability in both terms of system throughput and transaction confirmation time through maximizing the parallelization (Eager-Lazy pipeling) and efficiency (adaptive partitioning and PoS-based consensus protocol) in the computation layer, and minimizing the information propagation latency via a hybrid topology.

\bibliographystyle{abbrv}
{\Large \bibliography{fission} } 

\begin{thebibliography}{10}

\bibitem{anderson2016new}
L.~Anderson, R.~Holz, A.~Ponomarev, P.~Rimba, and I.~Weber.
\newblock New kids on the block: an analysis of modern blockchains.
\newblock {\em arXiv preprint arXiv:1606.06530}, 2016.

\bibitem{benet2014ipfs}
J.~Benet.
\newblock Ipfs-content addressed, versioned, p2p file system.
\newblock {\em arXiv preprint arXiv:1407.3561}, 2014.

\bibitem{bentov2016snow}
I.~Bentov, R.~Pass, and E.~Shi.
\newblock Snow white: Provably secure proofs of stake.
\newblock {\em IACR Cryptology ePrint Archive}, 2016:919, 2016.

\bibitem{Berenbrink:2006:DSL:1109557.1109597}
P.~Berenbrink, T.~Friedetzky, L.~A. Goldberg, P.~Goldberg, Z.~Hu, and
  R.~Martin.
\newblock Distributed selfish load balancing.
\newblock In {\em Proceedings of the Seventeenth Annual ACM-SIAM Symposium on
  Discrete Algorithm}, SODA '06, pages 354--363, Philadelphia, PA, USA, 2006.
  Society for Industrial and Applied Mathematics.

\bibitem{boneh2001short}
D.~Boneh, B.~Lynn, and H.~Shacham.
\newblock Short signatures from the weil pairing.
\newblock In {\em International Conference on the Theory and Application of
  Cryptology and Information Security}, pages 514--532. Springer, 2001.

\bibitem{buchbinder2009design}
N.~Buchbinder, J.~S. Naor, et~al.
\newblock The design of competitive online algorithms via a primal--dual
  approach.
\newblock {\em Foundations and Trends{\textregistered} in Theoretical Computer
  Science}, 3(2--3):93--263, 2009.

\bibitem{buterin2017casper}
V.~Buterin and V.~Griffith.
\newblock Casper the friendly finality gadget.
\newblock {\em arXiv preprint arXiv:1710.09437}, 2017.

\bibitem{castro2002practical}
M.~Castro and B.~Liskov.
\newblock Practical byzantine fault tolerance and proactive recovery.
\newblock {\em ACM Transactions on Computer Systems (TOCS)}, 20(4):398--461,
  2002.

\bibitem{cohen2003scale}
R.~Cohen and S.~Havlin.
\newblock Scale-free networks are ultrasmall.
\newblock {\em Physical review letters}, 90(5):058701, 2003.

\bibitem{david2018ouroboros}
B.~David, P.~Ga{\v{z}}i, A.~Kiayias, and A.~Russell.
\newblock Ouroboros praos: An adaptively-secure, semi-synchronous
  proof-of-stake blockchain.
\newblock In {\em Annual International Conference on the Theory and
  Applications of Cryptographic Techniques}, pages 66--98. Springer, 2018.

\bibitem{douceur2002sybil}
J.~R. Douceur.
\newblock The sybil attack.
\newblock In {\em International workshop on peer-to-peer systems}, pages
  251--260. Springer, 2002.

\bibitem{emek2010online}
Y.~Emek, M.~M. Halld{\'o}rsson, Y.~Mansour, B.~Patt-Shamir, J.~Radhakrishnan,
  and D.~Rawitz.
\newblock Online set packing and competitive scheduling of multi-part tasks.
\newblock In {\em Proceedings of the 29th ACM SIGACT-SIGOPS symposium on
  Principles of distributed computing}, pages 440--449, 2010.

\bibitem{even2005fast}
E.~Even-Dar and Y.~Mansour.
\newblock Fast convergence of selfish rerouting.
\newblock In {\em Proceedings of the sixteenth annual ACM-SIAM symposium on
  Discrete algorithms}, pages 772--781. Society for Industrial and Applied
  Mathematics, 2005.

\bibitem{eyal2016bitcoin}
I.~Eyal, A.~E. Gencer, E.~G. Sirer, and R.~Van~Renesse.
\newblock Bitcoin-ng: A scalable blockchain protocol.
\newblock In {\em NSDI}, pages 45--59, 2016.

\bibitem{eyal2018majority}
I.~Eyal and E.~G. Sirer.
\newblock Majority is not enough: Bitcoin mining is vulnerable.
\newblock {\em Communications of the ACM}, 61(7):95--102, 2018.

\bibitem{fabrikant2004complexity}
A.~Fabrikant, C.~Papadimitriou, and K.~Talwar.
\newblock The complexity of pure nash equilibria.
\newblock In {\em Proceedings of the thirty-sixth annual ACM symposium on
  Theory of computing}, pages 604--612. ACM, 2004.

\bibitem{fischer2008approximating}
S.~Fischer, L.~Olbrich, and B.~V{\"o}cking.
\newblock Approximating wardrop equilibria with finitely many agents.
\newblock {\em Distributed Computing}, 21(2):129--139, 2008.

\bibitem{ethereum2016}
E.~Foundation.
\newblock Create a democracy contract inethereum.
\newblock 2016.
\newblock https://www.ethereum.org/dao.

\bibitem{fountoulakis2012ultra}
N.~Fountoulakis, K.~Panagiotou, and T.~Sauerwald.
\newblock Ultra-fast rumor spreading in social networks.
\newblock In {\em Proceedings of the twenty-third annual ACM-SIAM symposium on
  Discrete Algorithms}, pages 1642--1660, 2012.

\bibitem{garey1979computers}
M.~R. Garey and D.~S. Johnson.
\newblock Computers and intractability: A guide to the theory of npcompleteness
  (series of books in the mathematical sciences), ed.
\newblock {\em Computers and Intractability}, 340, 1979.

\bibitem{gentle2006random}
J.~E. Gentle.
\newblock {\em Random number generation and Monte Carlo methods}.
\newblock Springer Science \& Business Media, 2006.

\bibitem{Gilad:2017:ASB:3132747.3132757}
Y.~Gilad, R.~Hemo, S.~Micali, G.~Vlachos, and N.~Zeldovich.
\newblock Algorand: Scaling byzantine agreements for cryptocurrencies.
\newblock In {\em Proceedings of the 26th Symposium on Operating Systems
  Principles}, SOSP '17, pages 51--68. ACM, 2017.

\bibitem{johnson2001}
D.~Johnson, A.~Menezes, and S.~Vanstone.
\newblock The elliptic curve digital signature algorithm (ecdsa).
\newblock {\em International journal of information security}, 1(1):36--63,
  2001.

\bibitem{klarmanbloxroute}
U.~Klarman, S.~Basu, A.~Kuzmanovic, and E.~G. Sirer.
\newblock bloxroute: A scalable trustless blockchain distribution network.
\newblock 2018.

\bibitem{kokoris2018omniledger}
E.~Kokoris-Kogias, P.~Jovanovic, L.~Gasser, N.~Gailly, E.~Syta, and B.~Ford.
\newblock Omniledger: A secure, scale-out, decentralized ledger via sharding.
\newblock In {\em 2018 IEEE Symposium on Security and Privacy (SP)}, pages
  583--598. IEEE, 2018.

\bibitem{li2018scaling}
C.~Li, P.~Li, W.~Xu, F.~Long, and A.~C.-c. Yao.
\newblock Scaling nakamoto consensus to thousands of transactions per second.
\newblock {\em arXiv preprint arXiv:1805.03870}, 2018.

\bibitem{marshall1979inequalities}
A.~W. Marshall, I.~Olkin, and B.~C. Arnold.
\newblock {\em Inequalities: theory of majorization and its applications},
  volume 143.
\newblock Springer, 1979.

\bibitem{maymounkov2002kademlia}
P.~Maymounkov and D.~Mazieres.
\newblock Kademlia: A peer-to-peer information system based on the xor metric.
\newblock In {\em International Workshop on Peer-to-Peer Systems}, pages
  53--65. Springer, 2002.

\bibitem{micali1999verifiable}
S.~Micali, M.~Rabin, and S.~Vadhan.
\newblock Verifiable random functions.
\newblock In {\em Foundations of Computer Science, 1999. 40th Annual Symposium
  on}, pages 120--130. IEEE, 1999.

\bibitem{nakamoto2008bitcoin}
S.~Nakamoto.
\newblock Bitcoin: A peer-to-peer electronic cash system.
\newblock 2008.

\bibitem{nisan2007algorithmic}
N.~Nisan, T.~Roughgarden, E.~Tardos, and V.~V. Vazirani.
\newblock {\em Algorithmic game theory}.
\newblock Cambridge University Press, 2007.

\bibitem{panconesi2008fast}
A.~Panconesi and M.~Sozio.
\newblock Fast distributed scheduling via primal-dual.
\newblock In {\em Proceedings of the twentieth annual symposium on Parallelism
  in algorithms and architectures}, pages 229--235. ACM, 2008.

\bibitem{puchinger2010multidimensional}
J.~Puchinger, G.~R. Raidl, and U.~Pferschy.
\newblock The multidimensional knapsack problem: Structure and algorithms.
\newblock {\em INFORMS Journal on Computing}, 22(2):250--265, 2010.

\bibitem{schrijver1998theory}
A.~Schrijver.
\newblock {\em Theory of linear and integer programming}.
\newblock John Wiley \& Sons, 1998.

\bibitem{schrijver2003combinatorial}
A.~Schrijver.
\newblock {\em Combinatorial optimization: polyhedra and efficiency},
  volume~24.
\newblock Springer Science \& Business Media, 2003.

\bibitem{sec20002}
S.~SEC.
\newblock Sec 2: Recommended elliptic curve domain parameters.
\newblock {\em Standards for Efficient Cryptography Group, Certicom Corp},
  2000.

\bibitem{Sompolinsky2016SPECTREAF}
Y.~Sompolinsky, Y.~Lewenberg, and A.~Zohar.
\newblock Spectre: A fast and scalable cryptocurrency protocol.
\newblock {\em IACR Cryptology ePrint Archive}, 2016.

\bibitem{sompolinsky2018phantom}
Y.~Sompolinsky and A.~Zohar.
\newblock Phantom: A scalable blockdag protocol, 2018.

\bibitem{hsiao-wei2017}
H.-W. Wang.
\newblock Ethereum sharding: Overview and finality.
\newblock 2017.
\newblock
  https://medium.com/@icebearhww/ethereum-sharding-and-finality-65248951f649.

\bibitem{wood2014ethereum}
G.~Wood.
\newblock Ethereum: A secure decentralised generalised transaction ledger.
\newblock {\em Ethereum project yellow paper}, 151:1--32, 2014.

\bibitem{zamani2018rapidchain}
M.~Zamani, M.~Movahedi, and M.~Raykova.
\newblock Rapidchain: Scaling blockchain via full sharding.
\newblock In {\em Proceedings of the 2018 ACM SIGSAC Conference on Computer and
  Communications Security}, pages 931--948. ACM, 2018.

\end{thebibliography}
\appendix

\section{Lemmas for Proof of Theorem~\ref{theorem:plb:convergence}}\label{appendix:hdn-relay}

Let $J:=\{j \mid r_j < r_i \}$ be the set of relay nodes that have a smaller load ratio than $m_i$. Similarly let $H:=\{h \mid r_h > r_i \}$ be the set of relay nodes that have a larger load ratio than $m_i$. Let $m:= |M|$ be the total number of relay nodes.

\begin{lemma}\label{lemma:plb:exp}
	Let $\mathbf{r}_t$ be the load ratio vector at time $t$. Then, the expected load ratio of any relay node $m_i$ at time $t+1$, denoted by $\mathbf{r}_{t+1}^i$, is the optimal load vector $\overline{r}$, i.e., $\forall r_i\in \mathbf{r}, \ \E\left[ \mathbf{r}_{t+1}^i \mid \mathbf{r}_t\right] = \overline{r}$.
\end{lemma}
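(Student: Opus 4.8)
The plan is to track the flow of load into and out of a fixed relay node $m_i$ during one synchronous step of PRS and show the expected change exactly cancels the deviation $r_i - \overline{r}$. First I would fix the load-ratio vector $\mathbf{r}_t$ and condition on it. For each node $v$ currently attached to some relayer $m_j$ with $r_j > r_i$ (i.e. $j \in H$ in the appendix notation, from $m_i$'s point of view $i \in J$ relative to $j$), the probability that $v$ picks $m_i$ as its relay candidate is $u_i/|U|$, and given that, it migrates to $m_i$ with probability $1 - r_i/r_j$; so each such node contributes expected inflow $\tfrac{u_i}{|U|}\bigl(1 - \tfrac{r_i}{r_j}\bigr)$. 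Summing over the $l_j = r_j u_j$ nodes at each $m_j$ with $j \in H$ gives total expected inflow $\tfrac{u_i}{|U|}\sum_{j \in H} (r_j u_j - r_i u_j)$. Symmetrically, each of the $l_i = r_i u_i$ nodes at $m_i$ picks a candidate $m_k$ with probability $u_k/|U|$, and leaves for $m_k$ with probability $1 - r_k/r_i$ when $r_k < r_i$ (i.e. $k \in J$); so the expected outflow is $r_i u_i \cdot \tfrac{1}{|U|}\sum_{k \in J} u_k\bigl(1 - \tfrac{r_k}{r_i}\bigr) = \tfrac{u_i}{|U|}\sum_{k \in J}(r_i u_k - r_k u_k)$.

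Next I would write $\E[l_{t+1}^i \mid \mathbf{r}_t] = l_i + (\text{inflow}) - (\text{outflow})$, divide by $u_i$ to pass to load ratios, and simplify. The inflow minus outflow is $\tfrac{u_i}{|U|}\Bigl(\sum_{j \in H}(r_j - r_i)u_j - \sum_{k \in J}(r_i - r_k)u_k\Bigr)$, and since for $p \notin H \cup J$ we have $r_p = r_i$ so $(r_p - r_i)u_p = 0$, this extends to a sum over \emph{all} relayers: $\tfrac{u_i}{|U|}\sum_{p}(r_p - r_i)u_p = \tfrac{u_i}{|U|}\Bigl(\sum_p r_p u_p - r_i \sum_p u_p\Bigr) = \tfrac{u_i}{|U|}\bigl(|V| - r_i |U|\bigr)$, using $\sum_p r_p u_p = \sum_p l_p = |V|$ and $\sum_p u_p = |U|$. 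Dividing by $u_i$ yields $\E[\mathbf{r}_{t+1}^i \mid \mathbf{r}_t] = r_i + \tfrac{|V|}{|U|} - r_i = \overline{r}$, which is exactly the claim.

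The main obstacle, and the step I would be most careful about, is the bookkeeping of \emph{concurrent} migrations: in one synchronous step many nodes move at once, and one must argue that the expected inflow/outflow computation is unaffected by this simultaneity. The clean way is linearity of expectation — I would define, for each node $v$, an indicator for ``$v$ is attached to $m_i$ at step $t+1$'' and note $l_{t+1}^i = \sum_v \mathbf{1}[v \to m_i]$, so $\E[l_{t+1}^i \mid \mathbf{r}_t] = \sum_v \Pr[v \to m_i \mid \mathbf{r}_t]$, and each such probability depends only on $\mathbf{r}_t$ and $v$'s own independent coin flips, not on what other nodes do (since a node's probability of landing at $m_i$ is determined by its current relayer's ratio, $m_i$'s ratio, and the candidate-selection/acceptance rule, all evaluated at the \emph{current} state). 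A secondary subtlety is the self-loop case $j = k$ in Eq.~\eqref{eq:relay:relay-selection}: a node at $m_i$ that draws $m_i$ itself as candidate simply stays, contributing nothing to either flow, so it is correctly excluded from both sums above. Once linearity of expectation is invoked the rest is the routine algebraic collapse to $\sum_p l_p = |V|$ described above.
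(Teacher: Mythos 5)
Your proposal is correct and follows essentially the same route as the paper's proof: both decompose $\E[l_{t+1}^i\mid\mathbf{r}_t]$ into the current load plus expected inflow from higher-ratio relayers minus expected outflow to lower-ratio ones, observe that the terms with $r_k=r_i$ vanish, and collapse the sum using $\sum_k l_k=|V|$ and $\sum_k u_k=|U|$. Your explicit justification via linearity of expectation over per-node indicators (handling simultaneity and the self-loop case) is a detail the paper leaves implicit, but it does not change the argument.
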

\begin{proof}
	For any relay node $m_k\in M$, the nodes that choose $m_k$ as a helper at time $t$ will choose $m_i$ as the relay node with probability $\Pr(k,j)$ (see Eq.~\ref{eq:relay:relay-selection}), hence we have
	\begin{align}
	\E[\mathbf{r}_{t+1}^i \mid \mathbf{r}_t] &= \frac{1}{u_i}\left(\sum_{k=1}^ml_k\Pr(k,i)\right) \nonumber\\
	&= \frac{1}{u_i} \left( \sum_{h\in H}l_h\Pr(h,i) + l_i - \sum_{j\in J}l_i\Pr(i,j) \right)\ \nonumber\\
	&= \frac{1}{u_i} \left( \frac{u_i}{u}\sum_{k\in H\cup J}l_k + l_i - \frac{l_i}{u}\sum_{k\in H\cup J}u_k \right) \nonumber\\
	&= \frac{1}{u_i} \left( \frac{u_i}{u}l - \frac{u_i}{u}\sum_{k\not\in H\cup J}l_k + \frac{l_i}{u}\sum_{k\not\in H\cup J}u_k\right).
	\end{align}
	For each node $k\not\in H\cup J$, $r_k=r_i$. Hence $\frac{u_i}{u}\sum_{k\not\in H\cup J}l_k -
	\frac{l_i}{u}\sum_{k\not\in H\cup J}u_k = 0$. As a result, $\E\left[ \mathbf{r}_{t+1}^i \mid
	\mathbf{r}_t\right] = \frac{l}{u} = \overline{r}$.
\end{proof}

\begin{lemma}\label{lemma:plb:var}
	$\sum_{i=1}^m\Var\left[ \mathbf{r}_{t+1}^i \mid \mathbf{r}_t \right] \leq \left(m\Phi(\mathbf{r}_t)\right)^\frac{1}{2}$. 
\end{lemma}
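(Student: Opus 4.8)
The plan is to use Lemma~\ref{lemma:plb:exp}: since $\E[\mathbf{r}_{t+1}^i\mid\mathbf{r}_t]=\overline{r}$ for every $i$, the quantity $\sum_{i}\Var[\mathbf{r}_{t+1}^i\mid\mathbf{r}_t]$ equals $\E[\Phi(\mathbf{r}_{t+1})\mid\mathbf{r}_t]$, so it suffices to bound each conditional variance and sum. First I would write the new load $l_{t+1}^i=u_i\,\mathbf{r}_{t+1}^i$ of relayer $m_i$ as a sum of independent random variables: conditioned on $\mathbf{r}_t$, each of the $l_k$ nodes currently at $m_k$ picks a candidate independently, and by the rule in Eq.~\ref{eq:relay:relay-selection} a node migrates only toward a relayer of strictly smaller load ratio. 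Hence $l_{t+1}^i$ is the number of nodes that remain at $m_i$, distributed $\mathrm{Bin}\big(l_i,\,1-\sum_{j:\,r_j<r_i}\Pr(i,j)\big)$, plus, for each $m_h$ with $r_h>r_i$, the number of nodes migrating $m_h\to m_i$, distributed $\mathrm{Bin}\big(l_h,\Pr(h,i)\big)$. These summands are driven by disjoint sets of nodes, hence independent, so $\Var[l_{t+1}^i\mid\mathbf{r}_t]$ is the sum of these binomial variances.

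Next I would bound each binomial variance $np(1-p)$ by its mean $np$, so $\Var[l_{t+1}^i\mid\mathbf{r}_t]$ is at most the expected number of nodes leaving $m_i$ for a lower-ratio relayer plus the expected number entering $m_i$ from a higher-ratio relayer. Substituting $\Pr(h,i)=\frac{u_i}{|U|}\big(1-r_i/r_h\big)$ and using $l_h/r_h=u_h$, the expected $m_h\to m_i$ flow is $\frac{u_i u_h}{|U|}(r_h-r_i)$, and symmetrically the expected $m_i\to m_j$ flow is $\frac{u_i u_j}{|U|}(r_i-r_j)$. Summing over all other relayers gives $\Var[l_{t+1}^i\mid\mathbf{r}_t]\le\frac{u_i}{|U|}\sum_{k}u_k\,|r_k-r_i|$, and therefore $\Var[\mathbf{r}_{t+1}^i\mid\mathbf{r}_t]=\Var[l_{t+1}^i\mid\mathbf{r}_t]/u_i^2\le\frac{1}{u_i|U|}\sum_{k}u_k\,|r_k-r_i|$.

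Finally I would sum over $i$, apply $|r_k-r_i|\le|r_k-\overline{r}|+|r_i-\overline{r}|$ to decouple the two indices, and use Cauchy--Schwarz in the form $\sum_i|r_i-\overline{r}|\le\sqrt{m}\,\sqrt{\Phi(\mathbf{r}_t)}$. After collecting the capacity factors, using that $\overline{r}=|V|/|U|$ is the per-unit-capacity load so that the $1/u_i$ from the variance conversion cancels against the $u_k$ from the candidate-selection probabilities, this yields $\sum_{i=1}^m\Var[\mathbf{r}_{t+1}^i\mid\mathbf{r}_t]\le\big(m\,\Phi(\mathbf{r}_t)\big)^{1/2}$, as claimed.

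The step I expect to be the main obstacle is precisely this last capacity-weighted accounting: the $1/u_i$ factors from translating load variances into load-ratio variances, the $u_k$ factors from the PRS candidate probabilities, and the normalization hidden in $\overline{r}$ must all be tracked so that they combine into a clean bound rather than a capacity-dependent quantity; a convenient reduction is to work with the load-deviation vector $(l_i-u_i\overline{r})_i$, in terms of which $\Phi$ and $\sum_k u_k|r_k-\overline{r}|$ are a weighted $\ell^2$ norm and an $\ell^1$ norm respectively, after which the bound follows from Cauchy--Schwarz. A secondary, cosmetic point is the absolute constant in front of $\sqrt{m\,\Phi(\mathbf{r}_t)}$: the naive triangle-inequality estimate gives a small constant multiple, which can be tightened by ordering the relayers by load ratio before applying Cauchy--Schwarz, although any constant already suffices for the $O(\log\log m)$ guarantee of Theorem~\ref{theorem:plb:convergence}.
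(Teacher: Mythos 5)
Your proposal follows essentially the same route as the paper's proof: you decompose the conditional variance of $\mathbf{r}_{t+1}^i$ into independent binomial contributions from each source relayer (the paper phrases this via a multinomial $(Y_{i,1},\dots,Y_{i,m})$), bound each variance by the corresponding expected flow, arrive at the same intermediate bound $\Var[\mathbf{r}_{t+1}^i\mid\mathbf{r}_t]\leq \frac{1}{u_i|U|}\sum_k u_k|r_k-r_i|$, and finish with the triangle inequality and Cauchy--Schwarz. The one step you defer --- the capacity-weighted accounting --- is exactly where the paper also cannot make the factors ``cancel'' cleanly and instead invokes extra assumptions ($u_i\geq 2$ and $2mu_k\leq |U|^2$) to absorb the $\frac{1}{u_i}$ and $u_k$ weights, so be aware that the constant-free bound does not follow for arbitrary capacities without such a normalization.
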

\begin{proof}
	Let $(Y_{i,1},\dots,Y_{i,m})$ be a random variable drawn from a multinomial distribution with the constraint $\sum_{j=1}^{m}Y_{i,j}=l_i$. For each relay node $m_i$, the variance of the load ratio of $m_i$ at time $t+1$ is shown as below:
	\begin{align}
	\Var[\mathbf{r}_{t+1}^i \mid \mathbf{r}_t] 
	&= \sum_{k=1}^m\Var(\frac{1}{u_i}Y_{k,i}) = \frac{1}{u_i^2}\sum_{k=1}^m\Var(Y_{k,i}) \nonumber \\
	&= \frac{1}{u_i^2} \left( \sum_{k=1}^ml_k\Pr(k,i)(1-\Pr(k,i)) \right)  \nonumber  \\
	&\leq \frac{1}{u_i^2} \left( \sum_{h\in H}l_h\Pr(h,i) + l_i(1-\Pr(i,i)) \right) \nonumber \\
	&= \frac{1}{u_i^2} \left( \sum_{h\in H}l_h\frac{u_i}{u}(1-\frac{r_i}{r_h})
	+ \sum_{j\in J}l_i\frac{u_j}{u}(1-\frac{r_j}{r_i}) \right) \nonumber \\
	&= \frac{1}{u_iu} \left( \sum_{h\in H}\left(l_h - r_iu_h\right) + \sum_{j\in J}\left(r_iu_j - l_j\right) \right) \nonumber \\
	&= \frac{1}{u_iu} \sum_{k=1}^m|l_k - r_iu_k| = \frac{1}{u} \sum_{k=1}^m\frac{u_k}{u_i}|r_k - r_i|, \nonumber
	\end{align}
	where $u_i$ is the upload capacity of relay node $m_i$, and we assume $u_i\geq 2$. Therefore, we sum up the variance of the load ratio at each relay node and we have
	\begin{align}
	\sum_{i=1}^m\Var[\mathbf{r}_{t+1}^i \mid \mathbf{r}_t]
	&\leq \frac{1}{u} \sum_{i=1}^m\sum_{k=1}^m\frac{u_k}{u_i}|r_k - r_i|. \label{inq:var}
	\end{align}
	
	For all $r_k,r_i\in \mathbf{r}$, $|r_k - r_i| \leq |r_k - \overline{r}| + |r_i - \overline{r}|$. Using this, we can simplify the above inequality:
	\begin{align}
	\sum_{i=1}^m\Var[\mathbf{r}_{t+1}^i \mid \mathbf{r}_t]
	&\leq \frac{1}{u} \sum_{i=1}^m\sum_{k=1}^m\frac{u_k}{u_i}\left( |r_k - \overline{r}| + |r_i - \overline{r}| \right) \nonumber \\
	&= \sum_{i=1}^m\frac{1}{u_i}|r_i - \overline{r}| + \frac{1}{u} \sum_{k=1}^m\sum_{i=1}^m\frac{u_k}{u_i} |r_k - \overline{r}|. \nonumber
	\end{align}
	
	Note that the multivariable function $f(u_1,...,u_m)=\sum_{i=1}^m\frac{1}{u_i}$ is a Schur-concave\cite{marshall1979inequalities} function, and the maximum of $f$, denoted by $\sup f$, is achieved if and only if
	$\forall i,j\in[1,m], \ u_i=u_j$ (by Karamata's inequality\cite{marshall1979inequalities}). That is $\sup f =
	f(\frac{u}{m},...,\frac{u}{m})$. Hence we have
	\begin{align}
	\frac{1}{u} \sum_{k=1}^m\sum_{i=1}^m\frac{u_k}{u_i} |r_k - \overline{r}|
	&\leq \frac{m}{u^2}\sum_{k=1}^m u_k|r_k - \overline{r}|.
	\end{align}
	
	Since $u_i \geq 2$ for each $m_i\in M$, we have $\sum_{i=1}^m\frac{1}{u_i}|r_i - \overline{r}| \leq \frac{1}{2}|r_i - \overline{r}|$. In addition, $2mu_k\leq u^2$ for any $u_k\geq 2$ and $m\geq 2$. Combining all the inequations above, we have
	\begin{align}
	\sum_{i=1}^m\Var[\mathbf{r}_{t+1}^i \mid \mathbf{r}_t]
	&\leq \sum_{i=1}^m|r_i - \overline{r}|
	\end{align}
	
	Then, by Cauchy-Schwarz inequality, we obtain
	\begin{align}
	\sum_{i=1}^m\Var[\mathbf{r}_{t+1}^i \mid \mathbf{r}_t]
	&\leq \left( m\sum_{i=1}^m | \overline{r} - r_i |^2 \right)^{\frac{1}{2}} \leq \left( m\Phi(\mathbf{r}_t) \right)^{\frac{1}{2}}
	\end{align}
	
\end{proof}

\end{document}